\documentclass[12pt]{article}
\usepackage{graphicx}
\usepackage[font=footnotesize]{caption}
\usepackage{natbib}
\usepackage{fancyhdr,amsmath,amssymb, amsthm,mathtools}
\usepackage{etoolbox}
\usepackage[colorlinks,citecolor=blue,urlcolor=blue]{hyperref}
\usepackage{booktabs}
\usepackage{tikz}
\usetikzlibrary{fit,positioning,calc}
\usepackage{scrextend}
\usepackage{hyperref}
\usepackage{chngcntr}
\usepackage{apptools}
\usepackage{bbm}
\usepackage{multirow}
\usepackage[T1]{fontenc}
\usepackage[export]{adjustbox}
\usepackage{mathrsfs}
\usepackage{mathrsfs,dsfont}
\usepackage{amsfonts}
\usepackage{stmaryrd}
\usepackage{mathrsfs,dsfont}
\usepackage{bbm,bm}
\usepackage{algorithm}
\usepackage{algpseudocode}
\usepackage{setspace}
\usepackage{subfigure}
\usepackage{upgreek}
\usepackage{textgreek}
\usepackage{makecell}
\usepackage{array}
\usepackage{enumitem}
\usepackage[export]{adjustbox}

\allowdisplaybreaks

\AtAppendix{\counterwithin{Lemma}{section}}

\newcommand{\E}{\mathds{E}}

\renewcommand{\P}{\mathds{P}}

\newcommand{\bx}{ {\bf x} }
\newcommand{\bix}{ {\boldsymbol x} }

\newcommand{\bP}{ {\bf P} }

\newcommand{\by}{ {\bf y} }
\newcommand{\biy}{ {\boldsymbol y} }
\newcommand{\bij}{ {\boldsymbol j} }
\newcommand{\bemme}{ {\bf m} }
\newcommand{\bim}{ {\boldsymbol m} }
\newcommand{\be}{{\bf e}}
\newcommand{\bie}{ {\boldsymbol e} }

\newcommand{\bY}{ {\bf Y} }

\newcommand{\bw}{ {\bf w} }
\newcommand{\biw}{ {\boldsymbol w} }

\newcommand{\br}{ {\bf r} }

\newcommand{\bk}{{\boldsymbol k}}

\newcommand{\bpi}{ {\boldsymbol \pi} }

\newcommand{\bLambda}{ {\boldsymbol \Lambda} }
\newcommand{\blambda}{ {\boldsymbol \lambda} }
\newcommand{\bPsi}{ {\boldsymbol \Psi} }

\newcommand{\bbeta}{ {\boldsymbol \beta} }

\newcommand{\bsigma}{ {\boldsymbol \sigma} }

\newcommand{\balpha}{ {\boldsymbol \alpha} }
\newcommand{\bxi}{ {\boldsymbol \xi} }
\newcommand{\bXi}{ {\boldsymbol \Xi} }
\newcommand{\bz}{ {\bf z} }
\newcommand{\biz}{ {\boldsymbol z} }

\newcommand{\brho}{ {\boldsymbol \rho} }

\newcommand{\bpsi}{ {\boldsymbol \psi} }

\newcommand{\bchi}{ {\boldsymbol \chi} }

\newcommand{\ddr}{\mathrm{d}}

\newcommand{\zdr}{\mathrm{z}}
\newcommand{\Ydr}{\mathrm{Y}}

\newcommand{\tdr}{\mathrm{t}}
\newcommand{\cdr}{\mathrm{c}}

\newcommand{\argmin}{\mbox{argmin}}

\newcommand{\appropto}{\mathrel{\vcenter{
  \offinterlineskip\halign{\hfil$##$\cr
    \propto\cr\noalign{\kern2pt}\sim\cr\noalign{\kern-2pt}}}}}

\definecolor{green}{rgb}{0.3, 0.73, 0.09}

\newcommand{\magenta}[1]{\!}

\newcommand{\prob}[1]{\P\left(#1\right)}
\newcommand{\cprob}[2]{\P\left(#1\left|#2\right.\right)}
\newcommand{\mean}[1]{\E\left[\,#1\right]}
\newcommand{\cmean}[2]{\E\left[\,#1\left|#2\right.\right]}

\DeclareFontFamily{OT1}{pzc}{}
\DeclareFontShape{OT1}{pzc}{m}{it}{<-> s * [1.10] pzcmi7t}{}
\DeclareMathAlphabet{\mathpzc}{OT1}{pzc}{m}{it}

\newcommand{\qt}[1]{`#1'}

\newtheorem{thm}{\textsc{Theorem}}[section]
\newtheorem{defi}{\textsc{Definition}}[section]

\newtheorem{prp}[thm]{\textsc{Proposition}}
\newtheorem{rmk}[thm]{\textsc{Remark}}
\newtheorem{exe}[thm]{\textsc{Example}}

\makeatletter
\providecommand{\leftsquigarrow}{%
  \mathrel{\mathpalette\reflect@squig\relax}%
}
\newcommand{\reflect@squig}[2]{%
  \reflectbox{$\m@th#1\rightsquigarrow$}%
}
\makeatother

\makeatletter
\newcommand\subsubsubsection{\@startsection{paragraph}{4}{\z@}{-2.5ex\@plus -1ex \@minus -.25ex}{1.25ex \@plus .25ex}{\normalfont\normalsize\bfseries}}
\makeatother

\newcommand{\blind}{1}

\expandafter\def\expandafter\normalsize\expandafter{%
    \normalsize%
    \setlength\abovedisplayskip{4pt}%
    \setlength\belowdisplayskip{4pt}%
    \setlength\abovedisplayshortskip{4pt}%
    \setlength\belowdisplayshortskip{4pt}%
}

\begin{document}

\def\spacingset#1{\renewcommand{\baselinestretch}%
{#1}\small\normalsize} \spacingset{1}

\if1\blind
{
	\title{\vspace{-20pt}
	\LARGE{\bf Exchangeable random permutations with an application to Bayesian graph matching}}
\author{Francesco Gaffi$^\star$, Nathaniel Josephs$^\dagger$, Lizhen Lin$^\ddagger$
  \hspace{.2cm}\\
    $^\star$Department of Economics, University of Bergamo, Italy\\
    $^\dagger$Department of Statistics, North Carolina State University, Raleigh, USA\\
    $^\ddagger$Department of Mathematics, University of Maryland, College Park, USA}
	\date{}
	\maketitle
} \fi

\if0\blind
{
	\bigskip
	\bigskip
	\bigskip
	\begin{center}
		{\LARGE\bf Exchangeable random permutations with an application to Bayesian graph matching}
	\end{center}
	\medskip
} \fi

\vspace{-25pt}

\begin{abstract}

We introduce a general Bayesian framework for graph matching grounded in a new theory of \emph{exchangeable random permutations}. Leveraging the cycle representation of permutations and the literature on exchangeable random partitions, we define, characterize, and study the structural and predictive properties of these probabilistic objects. A novel sequential metaphor, the \emph{position-aware generalized Chinese restaurant process}, provides a constructive foundation for this theory and supports practical algorithmic design. Exchangeable random permutations offer flexible priors for a wide range of inferential problems centered on permutations.
As an application, we develop a Bayesian model for graph matching that integrates a correlated stochastic block model with our novel class of priors. The cycle structure of the matching is linked to latent node partitions that explain connectivity patterns, an assumption consistent with the homogeneity requirement underlying the graph matching task itself. 
Posterior inference is performed through a node-wise blocked Gibbs sampler directly enabled by the proposed sequential construction. To summarize posterior uncertainty, we introduce \emph{perSALSO}, an adaptation of SALSO to the permutation domain that provides principled point estimation and interpretable posterior summaries. Together, these contributions establish a unified probabilistic framework for modeling, inference, and uncertainty quantification over permutations.
\end{abstract}
\noindent%
{\it Keywords:}  Bayesian nonparametrics, position-aware generalized Chinese restaurant process, exchangeable permutation probability function, correlated stochastic blockmodel. 

\spacingset{1.77} 
\vspace{-18pt}
\section{Introduction}\label{sec:intro}

The graph matching problem is a classic task that involves finding the correspondence between the vertices of two graphs, which has widespread applications in various fields, including computer vision, pattern recognition, bioinformatics, and social network analysis, to name a few.
To begin, consider two simple, undirected graphs $\mathcal{G}_1~=~(\mathcal{V},\mathcal{E}_1)$ and $\mathcal{G}_2=(\mathcal{V},\mathcal{E}_2)$ sharing the same set of nodes $\mathcal{V}$, with $\mathcal{E}_\ell\subset\mathcal{V}^2$ representing the edges of graph $\ell$. That is, for $\ell=1,2$,
    $\mathcal{E}_\ell=\{\{u,v\}\mid\exists \text{ an edge between $u$ and $v$ in graph } \ell\}$.
The \textit{exact} graph matching problem is to find, if it exists, an isomorphism between the graphs, \emph{i.e.} a bijection $\pi:\mathcal{V}\rightarrow\mathcal{V}$ such that
$\{u,v\}\in \mathcal{E}_1 \Leftrightarrow \{\pi(u),\pi(v)\}\in \mathcal{E}_2$.
The exact matching problem is NP hard, but the Weisfeiler-Lehman test is a well-known heuristic that has $O(|\mathcal V| \cdot |\mathcal E|)$ complexity \citep{leman1968reduction, huang2021short}.
More recently, subsequent quasi-polynomial algorithms for testing if two graphs are isomorphic have been discovered \citep{babai2016graph}.

A natural relaxation of the graph matching problem is to find a bijection $\pi$ of $\mathcal{V}$ in itself that minimizes the \emph{edge discrepancies} between $\mathcal{G}_1$ and $\mathcal{G}_2$.
Identifying the set of nodes $\mathcal{V}$ with the first $n$ integers $[n]:=\{1,\dots,n\}$, natural representations of the graphs are through their symmetric, binary adjacency matrices $\Ydr^{(1)}=(y^{(1)}_{uv})_{u,v=1}^n$ and $\Ydr^{(2)}=(y^{(2)}_{uv})_{u,v=1}^n$, defined as $y^{(\ell)}_{uv}=1$ if $\{u,v\}\in\mathcal{E}_\ell$, $y^{(\ell)}_{uv}=0$ if $\{u,v\}\notin \mathcal{E}_\ell$,
for $\ell=1,2$.
The \textit{inexact} graph matching problem is therefore to find $\hat\pi$, a \emph{permutation} of $[n]$ that is an element of the symmetric group $\mathcal{S}_n$, such that
\begin{equation}\label{eq:opt}
    \hat\pi=\displaystyle\argmin_{\pi\in \mathcal{S}_n}\sum_{1\leq u<v\leq n}y_{uv}^{(1)}(1-y_{\pi(u)\pi(v)}^{(2)})+(1-y_{uv}^{(1)})y_{\pi(u)\pi(v)}^{(2)}
\end{equation}
Since the objective in Equation~\eqref{eq:opt} is a version of the quadratic assignment problem, which is also NP hard, it is typically relaxed, for example by allowing for the minimization to occur over the set of all doubly  stochastic matrices rather than the set of permutation matrices.
An overview of various relaxations of Equation~\eqref{eq:opt} is given in \cite{qiao2021igraphmatch} and the literature on relaxed graph matching includes many different optimization techniques depending on the objective function \citep{lyzinski2014, lyzinski2015graph}.
Recently, \cite{ding2021efficient} proposed a novel probabilistic method that introduces the use of degree profiles, which are the empirical distributions of the degrees of a node's neighbors.
Under certain conditions, the authors prove that the smallest entries of the Wasserstein distance matrix between all pairs of degree profiles of the nodes in the two graphs recover the exact permutation with high probability.

One approach that is conspicuously missing from the literature is the Bayesian perspective.
There is a growing literature on Bayesian nonparametric methods for network data \citep{Dur(14),durante2017nonparametric, JMLR:v24:20-1206,josephs2023nested,josephs2023bayesian, amini2024hierarchical, durante2025partially}, but fewer references to Bayes-inspired techniques for graph matching.
\cite{pedersani2013} use Bayes' rule to match nodes based on their ``fingerprints," which are similar to degree profiles.
However, inference is cast as an optimization problem and no uncertainty quantification is provided.
\cite{lazaro2025probabilistic} introduce a probabilistic approach for aligning multiple networks to a latent blueprint network.
While the authors use MCMC to explore their posterior, a simple uniform prior is employed.
Moreover, the authors summarize their posterior by taking the element-wise mode, which itself is not guaranteed to be a permutation.
Neither of these methods provides a fully integrated Bayesian approach to graph matching.

To fill this gap, we introduce a new theory of exchangeable random permutations that, through their cycle representation, anchors our work to the prolific literature on exchangeable random partitions. In graph matching problems, exchangeability is a natural assumption: the
original node labels carry no intrinsic information, so any sensible model should treat nodes symmetrically, given the matching. In our
framework, this symmetry is expressed through the cycle representation;
we show that if a random permutation is exchangeable, then its cycle
structure must behave as an exchangeable partition of the nodes, and that the law of the cycles completely determines the
permutation distribution. Exchangeable laws on permutations are then
the least informative priors consistent with the desired symmetry, and at the
same time they admit a complete characterization via cycles and an
associated sequential predictive scheme. Sequential predictive constructions play a central role in modern
Bayesian modeling \citep{sandra}, because they encode exchangeability through a simple
one-step-ahead mechanism and naturally extend models beyond the observed
sample size. The predictive scheme we derive for exchangeable
permutations, the position-aware generalized Chinese restaurant process (PA-gCRP), parallels classical predictive constructions for
exchangeable random structures \citep{blackwell1973,pitman1995exchangeable}. We extend Kingman's notion of
partition structures \citep{Kin(78)} to \emph{permutation structures}, \emph{i.e.} sequences of Kolomogorov-consistent laws of random permutations of a growing number of objects. This combination of symmetry,
completeness, and sequential structure is what ultimately makes
posterior inference both interpretable and computationally feasible in
our framework.

This structured theoretical machinery allows us to address the graph matching problem by coupling the PA-gCRP with a correlated stochastic block model \citep{onaran2016optimal} that anchors the cycle structure of the matching to the latent block memberships. In other words, we conditionally constrain the search for the matching to the subspace of permutations that only match nodes in the same stochastic block, \emph{i.e.} exhibiting the same connectivity behavior. Such modeling choice follows naturally from the assumption of across-network homogeneity of the matched nodes, which is intrinsic to the graph matching task itself: one obviously needs to assume that the aligned nodes behave similarly in the two networks in order to reconstruct the network alignment from the observed connections.
Our Bayesian hierarchical construction entails the simultaneous learning of (i) a permutation that aligns the two networks, informed by (ii) a partition of the nodes, shared across networks coherently with the alignment. In this sense, our contribution can be framed in the context of multiplex network analysis \citep{kivela2014multilayer,macdonald22,pensky24,agterberg2025joint}, with the additional difficulty of an unknown alignment.
Together with a node-wise blocked Gibbs sampler, principled point estimation, and interpretable posterior summaries, our contributions establish a fully Bayesian framework for modeling, inference, and uncertainty quantification over permutations that paves the way towards applications beyond graph matching, such as record linkage \citep{fellegi69,steorts16,sadinle2017} and shuffled regression \citep{bala21,azadkia24,slawski24}. 

In summary, we provide:
(i) a full probabilistic characterization of exchangeable random
permutations through their cycle structure and an associated notion
of permutation structure, with a novel predictive sequential scheme (Section \ref{sec:erp}); (ii) a Bayesian model for graph matching that places a coherent prior
on the unknown permutation and yields full posterior uncertainty
quantification (Section \ref{sec:model}); (iii) a practical posterior sampling algorithm for permutations that
exploits the sequential representation to efficiently explore the space
of matching (Section \ref{sec:post}); (iv) a greedy posterior summary procedure for permutations (perSALSO), which
searches for a single representative matching informed by the full
posterior distribution (Section \ref{sec:summary}).

The proofs of our results, together with additional content, are provided in the supplementary material, whose sections are denoted by S.
Throughout, bold font denotes a random variable and $p(\cdot)$ denotes a probability mass function or a density function of a random element that is made clear by the argument, \emph{e.g.} $p(\Ydr)$ is the joint pmf of $\bY$. $\mathds P_\balpha$ is a conditional probability measure given $\balpha$, and $p_\alpha(\cdot)$ its probability mass or density function.

\vspace{-10pt}
\section{A class of nonparametric priors for permutations}\label{sec:erp}

In this section, we introduce a new class of nonparametric priors for permutations. We build upon the extensive literature on \emph{partition structures} \citep{Kin(78)}, which are sequences of consistent distributions of random partitions of a growing number of objects.
In Bayesian nonparametrics, the study of such stochastic processes, and of the random probability measures that induce them, has been leveraged for various tasks.
For example, they are used to directly model discrete data structures, as in species sampling models \citep{blackwell1973,Pit(96),lijoi07bio,fav09}, or for probabilistic clustering and density estimation by employing them in mixture models \citep{lo,escobar1995bayesian,Ish01,Lijoi05,Lijoi07,tommi} or in combination with stochastic block models \citep{nowicki2001estimation,Gen(19),legramanti2022extended, shen2025bayesian}. In this paper, we leverage them to model the distribution of the \emph{cycle structure} of a random permutation and as building blocks for defining \emph{exchangeable random permutations}.
Here, the exchangeability assumption, which is a cornerstone of Bayesian inference, translates into agnosticism to the original labeling of the objects that is being permuted.
For example, this is suitable for graph matching, where the unobserved permutation is a bijection mapping the nodes of one network into the nodes of the other: the original ordering of the nodes in both networks should not influence the inference of a matching.
In the following, we give some background on permutations, define exchangeable random permutations, and characterize them by defining suitable mathematical objects that identify their distribution. 

\subsection{The cycle structure of a permutation}

The set of all permutations of $[n]=\{1,\dots,n\}$, which we denote with $\mathcal{S}_n$, endowed with the composition operation $\cdot\,$, is known as the \emph{symmetric group}. Any $\pi\in\mathcal{S}_n$ is a bijection of $[n]$ in itself, and we adopt the function notation $\pi(i)=j$ to indicate that $\pi$ maps $i$ into $j$.  If $\sigma,\pi\in \mathcal{S}_n$ then $\sigma\cdot\pi(i)=\pi(\sigma(i))$. A permutation $\pi\in\mathcal{S}_n$ can be represented by a juxtaposition of disjoint \emph{cycles}.
Each cycle $(ijk\dots\ell)$ is a juxtaposition of elements of $[n]$, which we read as: $\pi(i)=j$, $\pi(j)=k$, \dots, $\pi(\ell)=i$. All elements of $[n]$ appear in this representation exactly once. For example, the permutation of the set $\{1,2,3,4\}$ given by
$
1\rightarrow 4,
2\rightarrow 1,
3\rightarrow 3,
4\rightarrow2
$
has cycle representation $(142)(3)$.
Since different cycle representations correspond to the same permutation, e.g. $(421)(3)$ and $(3)(142)$ give the same map as $(142)(3)$, one may add the following rules to ensure uniqueness:
(a) each cycle starts with its least element;
(b) cycles are ordered according to their first elements. The cycles of a permutation $\pi\in \mathcal{S}_n$ determine a \emph{partition} of $[n]$, which we will represent with the allocation vector $\zdr(\pi)=(z_1(\pi),\dots,z_n(\pi))$, where $z_i(\pi)$ is the ordinal of the cycle containing $i$.
Continuing with our example above,
$\zdr\left((142)(3)\right)=(1,1,2,1)$.
We call $\zdr(\pi)$ the \emph{cycle structure} of $\pi$. Notice that rule (b) implies that the labeling of the cycles in $\zdr(\pi)$ follows the so-called \emph{order of appearance}: cycle $j$ is the cycle that contains the least element that is not in cycle $j-1$. Such ordered allocation vectors represent partitions uniquely.
However, a partition $\zdr$ can correspond to $\prod_{j=1}^k(n_j-1)!$ different permutations, where $n_j$ is the cardinality of the $j$-th cycle of $\zdr$ and $k$ is the number of cycles in $\zdr$.
This is because the number of ways we can arrange the elements of the $j$-th cycle equals the number of \emph{cyclic} permutations of $[n_j]$, where a cyclic permutation is one whose cycle representation consists of a single cycle.

Next, we define a sequence of distributions for random permutations whose cycle structures are distributed as a partition structure, in the sense of \cite{Kin(78)}.

\vspace{-9pt}

\subsection{Exchangeable random permutations}

The properties of random permutations have been an object of great interest in the probabilistic literature: see, \emph{e.g.}, \cite{Arr(92)} for asymptotics of the cycle structure of a uniform random permutation, and \cite{Pit(19)} for asymptotics of more general random permutations. 

In this work, we are interested in defining a general class of distributions that can act as flexible priors for Bayesian inference on permutations, applicable to the graph matching problem, among other contexts where the labeling of the objects involved is not of interest, such as shuffled regression and record linkage. For this reason, we seek distributions on the space of permutations that: (i) incorporate the assumption of exchangeability of the statistical units involved (in our case, the node labels), and (ii) are consistent when the number of units grows.
The second desirable property is crucial because, in addition to unlocking theoretically sound predictions, it ensures a sequential construction that enables posterior computations (see Section \ref{sec:post}). 

Taking the lead from a construction described in \cite{Pit(96)} and borrowing the nomenclature of \cite{Kin(78)}, we wish to define general \emph{permutation structures} as infinite sequences of laws of random permutations $(\mathcal{L}_n)_{n\geq1}$ such that each $\mathcal{L}_n$ has support in $\mathcal{S}_n$ and it is invariant to relabeling, while the sequence fulfills a notion of \emph{consistency}, such as \qt{integrating out the $n+1$ object} from $\mathcal{L}_{n+1}$ one obtains exactly $\mathcal{L}_n$.
Each $\mathcal{L}_n$ will therefore be the law of an \emph{exchangeable random permutation}.
To this objective, we need to: (i) define a \emph{finitely} exchangeable random permutation, (ii) clarify in what sense a sequence of random permutations of a growing number of objects is \emph{consistent}, and (iii) finally define and characterize \emph{exchangeable} random permutations.

For task (i), we want to enforce distributional invariance with respect to label switching of the elements of the permutation. For example, if $\bpi$ is a random permutation in $\mathcal{S}_4$, then we want
$\P\left(\bpi=(132)(4)\right)=\P(\bpi=(1)(243))$, where we simply switched the element labels $1$ and $4$ (and used rules (a) and (b) for unique representation).

\begin{rmk}\label{rmk:conjugate}\normalfont
It is easy to see that label switching of the elements in a cycle representation is equivalent to the action of $\mathcal{S}_n$ on itself by conjugation: the permutation obtained switching the labels of the elements of the cycle representation of $\pi\in \mathcal{S}_n$ by $\sigma\in \mathcal{S}_n$ is $\sigma^{-1}\cdot\pi\cdot\sigma$. For example, we have $\pi'=\sigma^{-1}\cdot\pi\cdot\sigma$ with $\pi=(143)(2),\, \sigma=(1)(24)(3), \, \pi'=(123)(4)$.
\end{rmk}
\noindent In light of Remark \ref{rmk:conjugate}, the distribution of a finitely exchangeable random permutation of $[n]$ must be uniform on the conjugacy classes of $\mathcal{S}_n$. It is known that these are determined by the \emph{cycle types}.
For any permutation $\pi\in \mathcal{S}_n$, the cycle type of $\pi$ is the vector $\tdr(\pi)~=~(t_1(\pi),\dots,t_n(\pi))$ such that $t_i(\pi)$ is the number of cycles of length $i$ in $\pi$. In our example, $\pi=(143)(2)$ gives $t(\pi)=(1,0,1,0)$. 

\begin{rmk}\label{rmk:permind}\normalfont Notice that, if $\cdr(\pi)=(c_1(\pi),\dots,c_{k(\pi)}(\pi))$ is the the vector of cycle lengths of $\pi$ (where rule (i) fixes the ordering), and $k(\pi)$ is the number of its cycles, for any $\pi'\in\mathcal{S}_n$ we have $\tdr(\pi)=\tdr(\pi')$ if and only if $(c_1(\pi'),\dots,c_{k}(\pi'))=\left(c_{\sigma(1)} (\pi),\dots,c_{\sigma(k)}(\pi)\right)$ for some $\sigma\in\mathcal{S}_k$, where $k=k(\pi)=k(\pi')$. For example, $(143)(2)$ has (ordered) cycle lengths $(3,1)$ while $(1)(234)$ has $(1,3)$, but in fact they share the same cycle type $(1,0,1,0)$. 
\end{rmk}
\noindent Hence, we give the following definition.
\begin{defi}\label{def:rand_perm}
    A random permutation $\bpi\in \mathcal{S}_n$ is finitely exchangeable if
    $$\prob{\bpi=\pi}=\prob{\bpi=\pi'}$$ for any $\pi, \pi'\in \mathcal{S}_n$ such that $\tdr(\pi)=\tdr(\pi')$.
\end{defi}
\noindent Recalling the literature on partition structures  \citep[see \emph{e.g.}][]{Kin(78),Pit(96)}, we know that if $\bz=(\biz_1,\dots,\biz_n)$ is a finitely exchangeable random \emph{partition} of $[n]$ in the form of an allocation vector, then
$
    \prob{\bz=(z_1,\dots,z_n)}=\prob{\bz=\overline{\left(z_{\sigma(1)},\dots,z_{\sigma(n)}\right)}}
$
for any $\sigma\in \mathcal{S}_n$, where we indicate with $\overline{(z_{\sigma(1)},\dots,z_{\sigma(n)})}$ the only element of the orbit of $(z_{\sigma(1)},\dots,z_{\sigma(n)})$ with respect to the cluster label switching action whose labels are in order of appearance.
This leads to the following proposition.

\begin{prp}\label{prp:cy_str}
Let $\bpi$ be a random permutation of $[n]$. If $\bpi$ is finitely exchangeable, then its cycle structure $\zdr(\bpi)$ is an exchangeable random partition.
\end{prp}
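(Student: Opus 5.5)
The plan is to compute the law of $\zdr(\bpi)$ explicitly and check that it is invariant under relabeling of the elements. Fix an allocation vector $\zdr$ in order of appearance, with $k$ blocks of sizes $n_1,\dots,n_k$. The event $\{\zdr(\bpi)=\zdr\}$ is the disjoint union, over the $\prod_{j=1}^k(n_j-1)!$ permutations whose cycles are exactly the blocks of $\zdr$, of the events $\{\bpi=\pi\}$. Every such $\pi$ has cycle type $\tdr(\pi)=\tdr(\zdr)$, where $t_i(\zdr)$ is the number of blocks of $\zdr$ of size $i$. By Definition \ref{def:rand_perm}, $\bpi$ assigns a common probability $q(\tdr)$ to all permutations of a given cycle type $\tdr$. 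Hence
\[
\prob{\zdr(\bpi)=\zdr}=q\bigl(\tdr(\zdr)\bigr)\prod_{j=1}^k(n_j-1)!,
\]
which depends on $\zdr$ only through the multiset of its block sizes.

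Next, let $\sigma\in\mathcal{S}_n$ and set $\zdr'=\overline{(z_{\sigma(1)},\dots,z_{\sigma(n)})}$. The blocks of $\zdr'$ are the preimages under $\sigma$ of the blocks of $\zdr$. Since $\sigma$ is a bijection, these blocks have the same sizes as those of $\zdr$, up to reordering. So $k(\zdr')=k$ and $\tdr(\zdr')=\tdr(\zdr)$, and the product $\prod_j(n_j-1)!$ is unchanged because it is symmetric in the block sizes (the same point as Remark \ref{rmk:permind}). Plugging into the display gives $\prob{\zdr(\bpi)=\zdr}=\prob{\zdr(\bpi)=\zdr'}$ for every $\sigma$, which is exactly the definition of a finitely exchangeable random partition recalled before the statement.

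I see no real obstacle; the argument is essentially bookkeeping. The one step to state carefully is the counting in the first paragraph: the permutations with cycle structure $\zdr$ are exactly the choices of one cyclic permutation on each block, and all of them share one cycle type. One could also argue via Remark \ref{rmk:conjugate}: $\zdr(\sigma^{-1}\cdot\pi\cdot\sigma)$ is the relabeled partition, and conjugation preserves the law. However, the direct formula is cleaner, avoids fixing the composition convention, and also yields the explicit form of the partition law, which is useful later.
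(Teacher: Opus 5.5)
Your proof is correct, but it takes a different route from the paper's.

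The paper argues by equivariance. Relabeling the cycle structure of $\pi$ by $\sigma$ gives the cycle structure of the conjugate $\sigma\cdot\pi\cdot\sigma^{-1}$ (Remark~\ref{rmk:conjugate}). Conjugation preserves cycle type, so finite exchangeability of $\bpi$ passes to $\zdr(\bpi)$ in two lines. You instead compute the push-forward law explicitly, $\prob{\zdr(\bpi)=\zdr}=q(\tdr(\zdr))\prod_{j=1}^k(n_j-1)!$, and observe that it depends only on the multiset of block sizes, which relabeling preserves.

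The paper's argument is shorter and purely structural. However, its second equality tacitly uses that conjugation by $\sigma$ maps the whole fiber $\{\pi':\zdr(\pi')=\zdr(\pi)\}$ bijectively onto the fiber of the relabeled partition, preserving cycle types. That is what lets the pointwise equalities of Definition~\ref{def:rand_perm} add up. (Its relabeled vector also needs the order-of-appearance bar.) Your fiber count makes this step explicit and does not depend on the composition convention. It also already exhibits the factorization $p(\pi)=\varphi^{(n)}_k(n_1,\dots,n_k)/\prod_j(n_j-1)!$, with $q(\tdr)\prod_j(n_j-1)!$ playing the role of $\varphi^{(n)}_k$, which the proof of Theorem~\ref{thm:char} uses next.

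You are also right to aim at \emph{finite} exchangeability of the partition. That is all the paper's argument delivers and all Theorem~\ref{thm:char} needs, and the stronger reading (projection of a consistent sequence) would be false. For example, the uniform law on the three transpositions of $\mathcal{S}_3$ is finitely exchangeable. Yet any exchangeable extension of its cycle structure to $[m]$ with $m\geq5$ would forbid blocks of size three, leaving element $1$ with at most one block-mate. It would also give element $1$ on average $(m-1)/3>1$ block-mates, a contradiction.
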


For task (ii), we first need to define what it means to \qt{delete an object} from a permutation. In other words, we need to fix a projection of $\mathcal{S}_{n+1}$ onto $\mathcal{S}_{n}$ with respect to which the sequence of distributions $(\mathcal{L}_n)_{n\geq1}$ will be consistent. The most natural deletion procedure is the one obtained by canceling the object from the cycle representation of a permutation, which we formalize as follows.

\begin{defi}\label{def:del}
Given $\sigma\in \mathcal S_{n+1}$, we define $\mathpzc{d}(\sigma)\in \mathcal{S}_n$ as
\begin{equation}\label{eq:del}
\mathpzc{d}(\sigma)(i) = \begin{cases}
     \sigma(i) &\text{if}~i \neq \sigma^{-1}(n+1) \\
    \sigma(n+1) &\text{if}~i = \sigma^{-1}(n+1)
\end{cases}
\end{equation}
\end{defi}

\noindent For notational convenience, we also define the set $\mathcal{A}(\pi)$ of all permutations of $[n+1]$ that can be obtained from $\pi\in \mathcal{S}_n$ by adding $n+1$ to its cycle representation as follows.

\begin{defi}\label{def:add}
For any $\pi\in \mathcal{S}_n$, $\sigma\in\mathcal{A}(\pi)\subseteq \mathcal{S}_{n+1}$ if $\mathpzc{d}(\sigma)=\pi$.
\end{defi}

\noindent For example, if $\sigma=(143)(25)$, then $\mathpzc{d}(\sigma)=(143)(2)$.
On the other hand, if $\pi~=~(143)(2)$, then $\mathcal{A}(\pi)=\{(1543)(2),(1453)(2),(1435)(2),(143)(25),(143)(2)(5)\}$.
Notice that if $\pi~\in~\mathcal{S}_n$, then $|\mathcal{A}(\pi)| = n+1$. Hence, we have the following.
\begin{defi}
    A sequence $(\mathcal{L}_n)_{n\geq1}$ such that $\mathcal{L}_n$ is a distribution on $\mathcal{S}_n$ is consistent if, taking $\bpi_n\sim\mathcal{L}_n$ for any $n\geq1$, one has $\mathpzc{d}(\bpi_{n+1})\sim\mathcal{L}_n$ for any $n\geq1$. By extension we call $(\bpi_n)_{n\geq1}$ a consistent sequence of random permutations.
\end{defi}

\noindent Clearly, if $(\bpi_n)_{n\geq1}$ is a consistent sequence of random permutations, the support of $\bpi_{n+1}$ is a subset of $\mathcal{A}(\bpi_n)$ for any $n\geq1$.
\noindent Recalling that a consistent sequence of random \emph{partitions} $(\bz^{(n)})_{n\geq1}=((\biz^{(n)}_{1},\dots,\biz^{(n)}_{n}))_{n\geq1}$ is such that $(\biz^{(n+1)}_1,\dots,\biz^{(n+1)}_n)\overset{d}{=}\bz^{(n)}$ for any $n\geq1$, the following proposition can be stated.

\begin{prp}\label{prp:cons}
If $(\bpi_n)_{n\geq1}$ is a consistent sequence of random permutations, then $(z(\bpi_n))_{n\geq1}$ is a consistent sequence of random partitions.
\end{prp}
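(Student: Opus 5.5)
The plan is to reduce the claim to a deterministic identity. I will show that for every $\sigma\in\mathcal{S}_{n+1}$,
\begin{equation*}
\bigl(z_1(\sigma),\dots,z_n(\sigma)\bigr)=\zdr\bigl(\mathpzc{d}(\sigma)\bigr).
\end{equation*}
In words: restricting the cycle partition of $\sigma$ to $[n]$ gives exactly the cycle partition of the deleted permutation, labels included.

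Once this identity holds, the proposition follows at once. Consistency of $(\bpi_n)_{n\geq1}$ gives $\mathpzc{d}(\bpi_{n+1})\overset{d}{=}\bpi_n$. Applying the deterministic map $\zdr$ to both sides yields
\begin{equation*}
\bigl(z_1(\bpi_{n+1}),\dots,z_n(\bpi_{n+1})\bigr)=\zdr\bigl(\mathpzc{d}(\bpi_{n+1})\bigr)\overset{d}{=}\zdr(\bpi_n).
\end{equation*}
This is precisely the definition of a consistent sequence of random partitions, with $\bz^{(n)}=\zdr(\bpi_n)$.

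To prove the identity, I will first check that the blocks agree as sets. Let $m=\sigma^{-1}(n+1)$.
\begin{itemize}
\item If $\sigma(n+1)=n+1$, so that $n+1$ is a fixed point, then $\mathpzc{d}(\sigma)$ is $\sigma$ restricted to $[n]$. Its cycles are those of $\sigma$ with the singleton $\{n+1\}$ removed.
\item Otherwise, $\mathpzc{d}(\sigma)$ sends $m\mapsto\sigma(n+1)$ and agrees with $\sigma$ everywhere else. This splices $n+1$ out of its cycle $(\dots m\,(n{+}1)\,\sigma(n{+}1)\dots)$, leaving a cycle on the same set minus $\{n+1\}$, and leaves every other cycle unchanged.
\end{itemize}
In both cases the partition of $[n]$ induced by $\mathpzc{d}(\sigma)$ is the trace on $[n]$ of the partition induced by $\sigma$.

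The step I expect to need the most care is the labels. Allocation vectors are labelled in order of appearance, so I must check that restricting does not shift them. Since $n+1$ is the largest element, it is never the least element of a cycle unless it forms the singleton $(n{+}1)$. That singleton, having the largest minimum, is ordered last by rule (b). Removing it, or removing $n+1$ from a larger cycle, therefore leaves every other cycle's minimum and relative order unchanged. Hence the labels $z_i$ for $i\le n$ coincide exactly, with no relabelling needed.
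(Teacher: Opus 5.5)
Your proposal is correct and follows the same route as the paper: it establishes the deterministic identity $\bigl(z_1(\sigma),\dots,z_n(\sigma)\bigr)=\zdr\bigl(\mathpzc{d}(\sigma)\bigr)$ and then applies consistency of $(\bpi_n)_{n\geq1}$. The difference is only in detail: the paper asserts in one line that deletion preserves the cycle structure and the order-of-appearance labels, whereas you verify it explicitly, using that $n+1$ is the largest element and so can be a cycle minimum only as the last singleton.
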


Finally, for task (iii), we give the following.

\begin{defi}\label{def:experm}
    A random permutation $\bpi\in\mathcal{S}_n$ is exchangeable if there exist a consistent sequence of random permutations $(\bpi^\star_m)_{m\geq 1}$ such that $\bpi^\star_m\in\mathcal{S}_m$ is finitely exchangeable for any $m\geq1$ and $\bpi^\star_n\overset{d}{=}\bpi$.
\end{defi}

\noindent Similarly, an exchangeable random \emph{partition} $\bz$ of $[n]$ can be defined as a the $n$-th dimensional projection of a consistent sequence of finitely exchangeable random partitions. Its distribution \citep[see \emph{e.g.}][]{pitman1995exchangeable,Pit(96)} is determined by an \emph{exchangeable partition probability function} (EPPF).
That is, for any $1\leq k\leq n$, there is a symmetric function $\varphi_k^{(n)}(n_1,\dots,n_k)$ with $\sum_{j=1}^k n_j=n$ that gives the probability of a realization of $\bz$ with $k$ clusters of sizes $(n_1,\dots,n_k)$. When $n$ varies, the collection $(\varphi^{(n)})_{n\geq1}$ verifies
\begin{equation}\label{eq:proj_eppf}
\varphi_k^{(n)}(n_1,\dots,n_k) = \Big(\sum_{j=1}^k \varphi_k^{(n+1)}(n_1,\dots,n_j+1,\dots,n_k) \Big) + \varphi_{k+1}^{(n+1)}(n_1,\dots,n_k,1)
\end{equation}
With the term EPPF we refer to the entire collection $(\varphi^{(n)})_{n\geq1}$.
Now we can give the following characterization of exchangeable random permutations.

\begin{thm}\label{thm:char}
    A random permutation $\bpi\in\mathcal{S}_n$ is exchangeable if and only if its distribution is determined by a collection of functions of the form
    \begin{equation}\label{eq:pmf}
        p(\pi)=p^{(n)}_k(n_1,\dots,n_k)=\frac{1}{\prod_{j=1}^k(n_j-1)!}\varphi^{(n)}_k(n_1,\dots,n_k)
    \end{equation}
    for any $k\in[n]$, where $(\varphi^{(m)})_{m\geq1}$ is an EPPF and $(n_1,\dots,n_k)$ are the cycle lengths of $\pi$.
\end{thm}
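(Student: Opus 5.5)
The plan is to prove both directions by transferring everything to the cycle structure $\zdr(\bpi)$, using Propositions \ref{prp:cy_str} and \ref{prp:cons} for necessity and an explicit lifting construction for sufficiency. Two combinatorial facts drive the argument.

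\emph{Fact (A).} If $\bpi$ is finitely exchangeable, then for any partition $\zdr$ with block sizes $(n_1,\dots,n_k)$, the $\prod_j(n_j-1)!$ permutations with cycle structure $\zdr$ all share the same cycle type. Hence they are equiprobable, and
\[
\prob{\bpi=\pi}=\frac{\prob{\zdr(\bpi)=\zdr(\pi)}}{\prod_{j=1}^k(n_j-1)!}.
\]

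\emph{Fact (B).} Fix $\pi\in\mathcal S_n$ with cycle lengths $(n_1,\dots,n_k)$. The set $\mathcal A(\pi)$ consists of the $n_j$ permutations that insert $n+1$ after one of the elements of cycle $j$, for each $j$, together with the single permutation $\pi$ extended by the fixed point $(n+1)$. This gives $\sum_j n_j+1=n+1$ elements in total. The first kind has cycle structure $\zdr(\pi)$ with $n+1$ added to block $j$; the second has $n+1$ as a new singleton block.

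For the ``only if'' direction, let $(\bpi^\star_m)$ be the consistent, finitely exchangeable sequence given by Definition \ref{def:experm}. By Propositions \ref{prp:cy_str} and \ref{prp:cons}, $(\zdr(\bpi^\star_m))_m$ is a consistent sequence of finitely exchangeable random partitions. By the Kingman/Pitman theory recalled before \eqref{eq:proj_eppf}, its laws are given by an EPPF $(\varphi^{(m)})$. Fact (A) applied at level $n$ then yields \eqref{eq:pmf}.

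For the ``if'' direction, take an EPPF $(\varphi^{(m)})$ and define $\mathcal L_m$ on $\mathcal S_m$ by \eqref{eq:pmf} for every $m$. Three things must be checked.
\begin{enumerate}
\item $\mathcal L_m$ is a probability measure. Group permutations by their cycle structure: each partition $\zdr$ of $[m]$ has exactly $\prod_j(n_j-1)!$ preimages, so the total mass equals $\sum_{\zdr}\varphi^{(m)}(\zdr)=1$.
\item $\mathcal L_m$ is finitely exchangeable. The value of \eqref{eq:pmf} depends only on the multiset of cycle lengths, because $\varphi_k^{(m)}$ is symmetric; by Remark \ref{rmk:permind}, this multiset is exactly the cycle type.
\item The sequence is consistent, i.e.\ $\sum_{\sigma\in\mathcal A(\pi)}\mathcal L_{m+1}(\sigma)=\mathcal L_m(\pi)$.
\end{enumerate}

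The consistency check is the step requiring care. I will use Fact (B), together with the fact that $\mathpzc d$ maps $\sigma$ to $\pi$ exactly when $\sigma\in\mathcal A(\pi)$. Summing over $\mathcal A(\pi)$ gives
\[
\sum_{j=1}^k n_j\cdot\frac{\varphi_k^{(m+1)}(n_1,\dots,n_j+1,\dots,n_k)}{n_j\prod_{i}(n_i-1)!}+\frac{\varphi_{k+1}^{(m+1)}(n_1,\dots,n_k,1)}{\prod_i(n_i-1)!\cdot 0!}.
\]
In the $j$-th term, the factor $n_j$ counts the insertion positions in cycle $j$, and it cancels the change $(n_j-1)!\to n_j!$ in the denominator. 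Factoring out $1/\prod_i(n_i-1)!$, the relation \eqref{eq:proj_eppf} reduces the bracket to $\varphi_k^{(m)}(n_1,\dots,n_k)$, which is exactly $\mathcal L_m(\pi)$.

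The remaining subtlety is that the new singleton is labeled $k+1$ while the enlarged blocks keep their order of appearance. This is harmless because $\varphi$ is symmetric. With consistency established, $\bpi\overset{d}{=}\bpi^\star_n$ for the sequence $\bpi^\star_m\sim\mathcal L_m$, so $\bpi$ is exchangeable.
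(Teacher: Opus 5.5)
Your proposal is correct and follows essentially the same route as the paper. Necessity goes through Propositions \ref{prp:cy_str} and \ref{prp:cons} together with uniformity of the conditional law given the cycle structure, and sufficiency goes through the count of $n_j$ insertions per cycle in $\mathcal A(\pi)$ combined with \eqref{eq:proj_eppf}. The only cosmetic difference is how you obtain the sequence: you define the marginal laws $\mathcal L_m$ directly by \eqref{eq:pmf} and check normalization, whereas the paper first builds the partition sequence by Kolmogorov extension and then lifts it uniformly; since consistency is a property of the marginal laws alone, your version suffices.
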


\begin{rmk}\normalfont\label{rmk:finex}
    The statement in Theorem \ref{thm:char} implies that to sample from an exchangeable random permutation, one can simply sample a cycle structure from an exchangeable random partition and then sample uniformly among all the partitions sharing that cycle structure. Notice, moreover, that such decomposition is necessary to maintain even \emph{finite} exchangeability of the random permutation (see Proof of Theorem \ref{thm:char} in Section \ref{sec:proofs}).
\end{rmk}

\noindent Given this characterization, we call a consistent sequence of distributions $(\mathcal{L}_n)_{n\geq1}$ of exchangeable random permutations a \emph{permutation structure} and the sequence $(p^{(n)})_{n\geq1}$ that determines it as in Theorem \ref{thm:char} an \emph{exchangeable permutation probability function} (EPerPF). Incidentally, we know that a sequence of random permutations whose distribution is determined by $(\mathcal{L}_n)_{n\geq1}$ does exist (see Proof of Theorem \ref{thm:char} in Section \ref{sec:proofs}). Finally, we give the following characterization of EPerPFs.
\begin{thm}\label{thm:eperpf}
    A collection $(p^{(n)}_k)^{n\geq1}_{k\leq n}$ where $p^{(n)}_k$ is a non-negative, symmetric function on $\{(n_1,\dots,n_k):\,\sum_{j=1}^nn_j=n\}$, is an EPerPF if and only if it satisfies
    \begin{equation}\label{eq:cons}
    p^{(1)}(1)=1,\qquad p^{(n)}_{k(\pi)}(\cdr(\pi))=\sum_{\sigma\in\mathcal{A}(\pi)}p_{k(\sigma)}^{(n+1)}(\cdr(\sigma))\quad\mbox{for any}\;\pi\in\mathcal{S}_n
    \end{equation}
\end{thm}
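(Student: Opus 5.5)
The plan is to reduce Theorem \ref{thm:eperpf} to the known characterization of EPPFs through the correspondence $p^{(n)}_k=\varphi^{(n)}_k/\prod_j(n_j-1)!$ of Theorem \ref{thm:char}. The key combinatorial input is a description of $\mathcal{A}(\pi)$ in terms of cycle lengths. Let $\pi\in\mathcal{S}_n$ have cycle lengths $(n_1,\dots,n_k)$. Inserting $n+1$ right after one of the $n_j$ elements of cycle $j$ gives $n_j$ distinct $\sigma\in\mathcal{A}(\pi)$, each with cycle lengths $(n_1,\dots,n_j+1,\dots,n_k)$, up to reordering. The remaining element of $\mathcal{A}(\pi)$ is the one where $n+1$ forms a fixed point, with lengths $(n_1,\dots,n_k,1)$. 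These account for $\sum_j n_j+1=n+1=|\mathcal{A}(\pi)|$ elements. By the symmetry of $p^{(n+1)}$, reordering the lengths does not matter.

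For the forward direction, assume $(p^{(n)})$ is an EPerPF, so $p^{(n)}_k=\varphi^{(n)}_k/\prod_j(n_j-1)!$ for some EPPF $\varphi$. Then $p^{(1)}(1)=\varphi^{(1)}(1)=1$. Substituting into the right side of \eqref{eq:cons} and grouping by the description above gives
\[
\sum_{j=1}^k n_j\,\frac{\varphi^{(n+1)}_k(n_1,\dots,n_j+1,\dots,n_k)}{n_j!\prod_{i\neq j}(n_i-1)!}+\frac{\varphi^{(n+1)}_{k+1}(n_1,\dots,n_k,1)}{0!\prod_i(n_i-1)!}.
\]
Since $n_j/n_j!=1/(n_j-1)!$, this equals $\big[\sum_j\varphi^{(n+1)}_k(\dots,n_j+1,\dots)+\varphi^{(n+1)}_{k+1}(\dots,1)\big]/\prod_i(n_i-1)!$. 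By \eqref{eq:proj_eppf} this is $\varphi^{(n)}_k/\prod_i(n_i-1)!=p^{(n)}_k$.

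For the converse, start from a nonnegative symmetric collection satisfying \eqref{eq:cons} and define $\varphi^{(n)}_k(n_1,\dots,n_k):=\prod_j(n_j-1)!\,p^{(n)}_k(n_1,\dots,n_k)$. This $\varphi$ is nonnegative and symmetric. Running the same computation backwards shows that $\varphi$ satisfies \eqref{eq:proj_eppf}, and $\varphi^{(1)}(1)=1$. I will then invoke the standard fact (Pitman) that a nonnegative symmetric collection satisfying $\varphi^{(1)}(1)=1$ and the addition rule \eqref{eq:proj_eppf} is an EPPF. Concretely, it defines a consistent sequence of finitely exchangeable partition laws. Normalization at each $n$ follows by induction from the recursion: summing over partitions of $[n+1]$ is the same as summing over partitions of $[n]$ and their one-point extensions.

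Theorem \ref{thm:char} then says the resulting laws $p^{(n)}$ are laws of exchangeable random permutations. To confirm that the sequence is consistent, hence a permutation structure, I will compute $\P(\mathpzc{d}(\bpi_{n+1})=\pi)=\sum_{\sigma\in\mathcal{A}(\pi)}p^{(n+1)}(\sigma)$, which equals $p^{(n)}(\pi)$ by \eqref{eq:cons}. Normalization of $p^{(n)}$ on $\mathcal{S}_n$ also follows by induction from \eqref{eq:cons}, because the sets $\mathcal{A}(\pi)$, $\pi\in\mathcal{S}_n$, partition $\mathcal{S}_{n+1}$.

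I expect the main obstacle to be the combinatorial bookkeeping of $\mathcal{A}(\pi)$. One must show exactly $n_j$ insertions into each cycle $j$ and check that all $n+1$ results are distinct, using that $\mathpzc{d}$ is defined via $\sigma^{-1}(n+1)$. The second delicate point is making sure the appeal to the partition-structure characterization is exactly what Theorem \ref{thm:char} needs.
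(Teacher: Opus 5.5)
Your proposal is correct and follows essentially the paper's route: necessity by counting $\mathcal{A}(\pi)$ exactly as in the proof of Theorem~\ref{thm:char}, and sufficiency by normalizing through $\mathcal{S}_{n+1}=\bigcup_{\pi\in\mathcal{S}_n}\mathcal{A}(\pi)$, showing that $\prod_j(n_j-1)!\,p^{(n)}_k$ satisfies the EPPF addition rule, and appealing to Pitman. The only cosmetic difference is that you define $\varphi$ directly by this product formula, whereas the paper first proves finite exchangeability and obtains the same function from the decomposition in Remark~\ref{rmk:finex}.
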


Permutation structures can be determined sequentially through their predictive probability functions. In the next section, we will illuminate this predictive scheme by modifying the restaurant metaphor used to represent the sequential construction of exchangeable random partitions.

\subsection{Predictive structure: the position-aware gCRP}\label{sec:predictive}

The Chinese restaurant process (CRP) \citep{blackwell1973,aldous1985} is a well-known generative scheme for a sequence of random partitions. In the metaphor, customers are seated at tables in a sequential fashion: as each new customer arrives, they either sit at an occupied table, or create a new table and sit alone. The former occurs with probability proportional to the number of customers already seated at the old table, while the latter occurs with probability proportional to a concentration parameter $\theta$.
At each discrete time point, one clusters together the $n$ seated customers according to the table they are sitting at, obtaining a realization of the random partition. This construction is directly related to the Pòlya urn scheme presented in \cite{blackwell1973} and, for every $n$, is equal in distribution to the random partition obtained from the ties on $n$ exchangeable samples directed by a Dirichlet process (see Example \ref{ex:dirichlet} and Section \ref{sec:ssp}).

When this mechanism is extended to describe the predictive scheme of any exchangeable random partition, it is called \emph{generalized Chinese restaurant process} (gCRP). If the random partition is Gibbs-type \citep{gnedinpitman,deblasi2015}, then the characteristic product form of the EPPF guarantees that the probability of sitting at a specific existing table is a function only of the number of customers at that table.
In general, whenever one has a partition structure, the urn scheme can be obtained as ratios of the consecutive probability mass functions. 

In \cite{Pit(96)}, the CRP is rephrased to generate a uniform random permutation. Here, we extend this strategy and modify the gCRP by considering a \emph{position-aware} version in which each new customer can either (i) sit \emph{to the left} of any of the already seated customers, and each seat at a given table is equiprobable, or (ii) create a new table and sit alone. Specifically, if at time $n$ there are $n_j$ customers at table $j$ and $k$ tables, the probability that the $(n+1)$-th customer sits at each of the available seats of table $j$ is 
\begin{equation}\label{eq:urn1}
    \frac{\varphi_k^{(n+1)}(n_1,\dots,n_j+1,\dots,n_k)}{{n_j}\,\varphi_k^{(n)}(n_1,\dots,n_k)}
\end{equation}
while the probability that they sit alone at a new table is
\begin{equation}\label{eq:urn2}
    \frac{\varphi_{k+1}^{(n+1)}(n_1,\dots,n_k,1)}{\varphi_k^{(n)}(n_1,\dots,n_k)}
\end{equation}
where $(\varphi^{(n)})_{n\geq 1}$ is an EPPF.
At each time point, we interpret the tables as cycles of a permutation, taking the specific seats at the table into account: if customer $i$ sits to the left of customer $j$ and to the right of customer $k$, then in that realization of the random permutation $j$ is mapped to $i$ and $i$ is mapped to $k$.
Clearly, the partition structure given by the division in tables is the one generated by a gCRP with EPPF $(\varphi^{(n)})_{n\geq 1}$.
The following proposition identifies the distribution of a random permutation $\bpi\in \mathcal{S}_n$ obtained through the position-aware gCRP (PA-gCRP) by formulas \eqref{eq:urn1}-\eqref{eq:urn2} with the pmf of an exchangeable permutation expressed in \eqref{eq:pmf}.

\begin{prp}\label{prp:pred}
The distribution of a consistent sequence of exchangeable random permutations $(\bpi_n)_{n\geq1}$ is determined by the following predictive scheme.
For any $\pi\in \mathcal{S}_n$ with $\cdr(\pi)~=~(n_1,\dots,n_k)$, if $\sigma^{(j)}\in\mathcal{A}(\pi)\subset\mathcal{S}_{n+1}$ such that $z_{n+1}(\sigma^{(j)})=j$, then
\begin{equation}\label{eq:pred}
    \cprob{\bpi_{n+1}=\sigma^{(j)}}{\bpi_n=\pi} = 
    \begin{cases}
        \frac{1}{n_j}\frac{\varphi_k^{(n+1)}(n_1,\dots,n_j+1,\dots,n_k)}{\varphi_k^{(n)}(n_1,\dots,n_k)}&1\leq j\leq k \\
        \frac{\varphi_{k+1}^{(n+1)}(n_1,\dots,n_k,1)}{\varphi_k^{(n)}(n_1,\dots,n_k)}&j=k+1    
    \end{cases}
\end{equation}
where $(\varphi^{(n)})_{n\geq1}$ is the EPPF of $(\zdr(\bpi_n))_{n\geq1}$.
If $\sigma\notin\mathcal{A}(\pi)$, then $\cprob{\bpi_{n+1}=\sigma}{\bpi_n=\pi}=0$.    
\end{prp}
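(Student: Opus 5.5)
We will obtain the predictive scheme as the ratio of consecutive marginal probabilities, relying on Theorem \ref{thm:char}. Fix a consistent sequence $(\bpi_n)_{n\geq1}$ of exchangeable random permutations, and let $(\varphi^{(n)})_{n\geq1}$ be the EPPF of the cycle structures; it is an EPPF by Propositions \ref{prp:cy_str} and \ref{prp:cons}. By Theorem \ref{thm:char}, for $\pi\in\mathcal S_n$ with $\cdr(\pi)=(n_1,\dots,n_k)$ we have
$$\prob{\bpi_n=\pi}=\varphi^{(n)}_k(n_1,\dots,n_k)\Big/\prod_{j=1}^k(n_j-1)!.$$
We first treat the vanishing case. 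If $\sigma\notin\mathcal A(\pi)$, then $\mathpzc{d}(\sigma)\neq\pi$. The natural coupling here is $\bpi_n=\mathpzc{d}(\bpi_{n+1})$; the consistent sequence can be realized this way, and we will argue that this realization is the one underlying the conditional probabilities. Under this coupling, $\{\bpi_{n+1}=\sigma,\bpi_n=\pi\}$ is empty, so the conditional probability is $0$.

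For $\sigma\in\mathcal A(\pi)$, the event $\{\bpi_{n+1}=\sigma\}$ is contained in $\{\bpi_n=\pi\}$. Hence the conditional probability equals $\prob{\bpi_{n+1}=\sigma}/\prob{\bpi_n=\pi}$, and both factors are given by \eqref{eq:pmf}. We then compute $\cdr(\sigma)$ according to where $n+1$ is inserted. If $n+1$ is inserted into the $j$-th cycle of $\pi$, its position in the cycle-with-least-element ordering is unchanged, since $n+1$ is never a least element of a nontrivial cycle. Thus $\cdr(\sigma)=(n_1,\dots,n_j+1,\dots,n_k)$ and $z_{n+1}(\sigma)=j$. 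There are exactly $n_j$ such $\sigma$, one for each "seat to the left" of an element of cycle $j$. If instead $n+1$ forms a fixed point, it becomes cycle $k+1$ and $\cdr(\sigma)=(n_1,\dots,n_k,1)$; this is the single remaining element of $\mathcal A(\pi)$, which gives the count $|\mathcal A(\pi)|=n+1$. Symmetry of $\varphi$ is what lets us write these arguments in this order.

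Taking ratios, the factorials almost cancel. In the first case,
$$\prod_i(n_i-1)!\Big/\Big(n_j!\prod_{i\neq j}(n_i-1)!\Big)=1/n_j,$$
which yields the first line of \eqref{eq:pred}. In the second case the extra factor is $0!=1$, which yields the second line. As a check, these probabilities sum to $1$ by \eqref{eq:proj_eppf}, equivalently by \eqref{eq:cons} in Theorem \ref{thm:eperpf}.

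Finally, the word "determined" requires a converse. Given $p(\pi_1)=1$ and the transition kernel \eqref{eq:pred}, the chain rule reconstructs $\prob{\bpi_n=\pi}$ uniquely for every $n$, because each $\pi\in\mathcal S_n$ has a unique ancestral path $\mathpzc{d}^{\,n-m}(\pi)$. Telescoping the ratios then gives back \eqref{eq:pmf}.

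The main obstacle is the subtlety about the joint law. Consistency is stated only in terms of marginals, $\mathpzc{d}(\bpi_{n+1})\sim\mathcal L_n$, so the conditional law of $\bpi_{n+1}$ given $\bpi_n$ is meaningful only once a joint construction is fixed. We handle this by taking the canonical projective (Kolmogorov) construction with $\bpi_n=\mathpzc{d}(\bpi_{n+1})$ almost surely, the same construction invoked after Theorem \ref{thm:char}, and by interpreting the statement relative to it.
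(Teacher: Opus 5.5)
Your proposal is correct and follows the paper's proof. On $\mathcal{A}(\pi)$ the predictive probability is the ratio $\prob{\bpi_{n+1}=\sigma}/\prob{\bpi_n=\pi}$, with both marginals taken from \eqref{eq:pmf} and the same count of elements of $\mathcal{A}(\pi)$ by cycle lengths, and the converse is the same telescoping product along the unique path $\pi_i=\mathpzc{d}(\pi_{i+1})$. Your coupling $\bpi_n=\mathpzc{d}(\bpi_{n+1})$ a.s.\ is exactly what the paper uses implicitly when it writes the joint as $\prob{\bpi_{n+1}=\sigma}\mathbbm{1}_{\mathcal{A}(\pi)}(\sigma)$; note, though, that the construction in the proof of Theorem \ref{thm:char} only gives equality in law, so this coupling should be obtained directly from Kolmogorov's theorem on the projective system $(\mathcal{S}_n,\mathpzc{d})$.
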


\noindent In the light of this last characterization, we will refer to an exchangeable random permutation driven by the EPPF $(\varphi^{(n)})_{n\geq1}$ with the notation $\bpi\sim\text{PA-gCRP}(\varphi^{(n)})$.

Specifying an EPPF, and hence an EPerPF, can be challenging. A viable route is to define an almost surely discrete random probability measure, consider an exchangeable sequence directed by it and induce a random partition through the ties in such sequence, with the simple rule: two objects are in the same cycle if and only if the correspondent realizations in the sequence coincide.
Then a random permutation is obtained considering the uniform distribution conditionally on the cycle structure.
If the employed random probability measure is distributed as a Dirichlet process \citep{Fer(73)}, then we obtain the random permutation detailed in the following example.

\begin{exe}[Dirichlet process]\label{ex:dirichlet}\normalfont
If we employ the EPPF describing the partition induced by the ties in an exchangeable sequence directed by a Dirichlet process, we have
\begin{equation*}
    p^{(n)}_k(n_1,\dots,n_k)=\frac{\Gamma(\theta)}{\Gamma(\theta+n)}\theta^k  
\end{equation*}
where $\sum_{j=1}^kn_j=n$, for some concentration parameter $\theta>0$.
The predictive scheme is
$$p(\pi_{n+1}\mid\pi_n)=\frac{1}{n+\theta}\mathbbm{1}_{\{j\leq k\}}+\frac{\theta}{n+\theta}\mathbbm{1}_{\{j=k+1\}}$$
where $k=k(\pi_n)$ and $j=z_{n+1}(\pi_{n+1})$.
If $\theta=1$, we obtain the uniform distribution on $\mathcal{S}_n$. For $\theta>1$, permutations with more cycles are more probable with the identity being the most probable permutation. For $\theta<1$, permutations with fewer cycles are favored with the $(n-1)!$ cyclic permutations being the most probable ones. Such properties can be leveraged if one can elicit enough prior information. Otherwise, one can place a hyperprior on $\theta$ (see Section \ref{sec:post}). Notice that, for this particular permutation structure, the probability of the realizations does not depend on the specific cycle structure $\{n_1,\dots,n_k\}$, but only on the number of cycles $k$. If one can elicit some specific information about the composition of the cycles, it is sensible to go beyond the Dirichlet process and adopt one of the priors proposed in Examples \ref{exe:nsp}, \ref{exe:pyp}, and \ref{exe:gnp}, which are included in our implementation.
\end{exe}

\vspace{-15pt}
\section{A correlated SBM with PA-gCRP priors}\label{sec:model}

Our generative model for graph matching assumes that the cycles of the permutation of interest, which match the nodes of the two observed networks, coincide with the latent partition that characterizes the probability of connections between the nodes.
This modeling choice is in accordance with the homogeneity assumption that underlines every statistical graph matching method: the inferred permutation should minimize the edge discrepancies between the two observed networks because we assume implicitly that the same pairs of nodes should behave similarly.
In other words, one expects the matched nodes to have the same connectivity patterns in the two networks.
We codify this by generating the observed networks through a correlated stochastic block model (cSBM), where the latent division in connectivity blocks and the cycle structure of the permutation coincide so that only nodes in the same block can be matched.
This guarantees the simultaneous learning of a connectivity-informed matching and a model-based node clustering.
We present our cSBM with position-aware gCRP priors in an augmented formulation featuring a latent parent network.
An alternative formulation that directly gives the conditional distribution of one observed network given the other, whose equivalence carries over from the correlated Erd\H{o}s-R\'enyi model \citep{nate}, is given in Section \ref{sec:conditional}.

\subsection{Latent parent network formulation}

The cSBM, which is a generalization of the correlated Erd\H{o}s-R\'enyi model, can be defined as a pair of noisy observations of a common latent parent network.
The observed (undirected) networks are represented by binary (symmetric) adjacency matrices $\bY^{(1)}~=~(\biy^{(1)}_{uv})_{u,v=1}^n$ and $\bY^{(2)}=(\biy^{(2)}_{uv})_{u,v=1}^n$.
Rows and columns of $\bY^{(2)}$ are permuted through a random permutation $\bpi$.
The common parent network is encoded in a binary (symmetric) adjacency matrix $\bY=\left(\biy_{uv}\right)_{u,v=1}^n$ and its entries are defined as independent Bernoulli trials conditioned on a latent block membership of the nodes and the probabilities of connections between blocks.
This is the same as the classic Bayesian SBM \citep{nowicki2001estimation}, except here the latent blocks are specified by the cycle structure $\zdr(\bpi)$, where $\bpi$ is the matching between the rows of $\bY^{(1)}$ and the rows of $\bY^{(2)}$. Block connection probabilities are given by the matrix $\bXi=(\bxi_{jh})_{j,h=1}^{k(\bpi)}$, where $k(\bpi)$ is the (possibly random) number of cycles in $\bpi$. 
For conjugacy, the entries of $\bXi$ are given an independent beta prior, while the prior on $\bpi$ is generated by the position-aware generalized Chinese restaurant process (PA-gCRP) that we defined in Section~\ref{sec:erp}, driven by the EPPF $(\varphi^{(n)}_k)_{k\in[n]}$.
The hierarchical model is therefore:
\begin{equation}\label{eq:model1}
\begin{aligned}
    (\biy_{uv}^{(1)},\biy_{\bpi(u)\bpi(v)}^{(2)})_{u<v}\mid \bY,\bpi&\overset{ind}{\sim}\begin{cases}
        \text{Bern}(1-\beta)&\text{if }\,\biy_{uv}=1\\
        \text{Bern}(\alpha)&\text{if }\,\biy_{uv}=0
    \end{cases}\\
    (\biy_{uv})_{u<v}\mid\bXi,\,\zdr(\bpi)&\overset{ind}{\sim}\text{Bern}(\bxi_{z_u(\bpi)z_v(\bpi)})\\
    \bXi\mid\zdr(\bpi)&\overset{iid}{\sim}\text{beta}(a_\xi,b_\xi)\\
    \bpi&\sim \text{PA-gCRP}(\varphi^{(n)})
    \end{aligned}
\end{equation}
for $\alpha,\beta\in(0,1)$ with $\alpha<1-\beta$ and $a_\xi,b_\xi>0$.
If inference on the block connection probabilities is not of interest, one can integrate them out to obtain
\begin{equation}\label{eq:betabin}
    p(\Ydr \mid \zdr(\pi))= \prod_{j\leq h}^{k(\pi)}\frac{\mbox{B}(a_\xi+m_{jh},b_\xi+\overline{m}_{jh})}{\mbox{B}(a_\xi,b_\xi)}
\end{equation}
where $m_{jh}$ and $\overline m_{jh}$ are the number of edges and non-edges in $\bY$ between cycles $j$ and $h$. The Bayesian formulation makes it possible to simultaneously infer the error rates, if we further complete the model \eqref{eq:model1} with priors on $\balpha$ and $\bbeta$.
We restrict the support of such priors to $(0,1/2)$, since any higher noise level would cause a bias in any matching procedure that minimizes the edge discrepancy \citep{luna}.
A natural choice is the truncated beta distribution, which still allows a beta-binomial marginalization of the kind in \eqref{eq:betabin}. Hence, if $\balpha\sim \mbox{tbeta}(1/2;a_0,b_0)$ and $\bbeta\sim \mbox{tbeta}(1/2;a_1,b_1)$, we have the following joint distribution characterizing the model
\begin{equation}\label{eq:joint1}
\begin{split}
p(\Ydr^{(1)},\Ydr^{(2)},\Ydr,\pi)= \prod_{q=0,1}\frac{\mbox{B}\left(1/2;a_q+\bar e^{(1)}_{q}(\mathfrak{id})+\bar e^{(2)}_{q}(\pi),b_q+e^{(1)}_q(\mathfrak{id})+e^{(2)}_{q}(\pi)\right)}{\mbox{B}(1/2;a_q,b_q)}\\
\prod_{j\leq h}^{k(\pi)}\frac{\mbox{B}(1;a_\xi+m_{jh},b_\xi+\overline{m}_{jh})}{\mbox{B}(1;a_\xi,b_\xi)}
    p_{k(\pi)}^{(n)}(\cdr(\pi))
\end{split}
\end{equation}
where $\mbox{B}(q;a,b)=\int_0^{q}x^{a-1}(1-x)^{b-1}\ddr x$ and
\begin{equation}\label{eq:exponents}    e_q^{(\ell)}(\sigma)=\sum_{u<u'} \mathbbm{1}_{\{q\}}(y_{uu'})\mathbbm{1}_{\{q\}}(y^{(\ell)}_{\sigma(u)\sigma(u')})\qquad \bar e_q^{(\ell)}(\sigma)=\sum_{u<u'} \mathbbm{1}_{\{q\}}(y_{uu'})\mathbbm{1}_{\{1-q\}}(y^{(\ell)}_{\sigma(u)\sigma(u')})\end{equation}
are the concordant and discordant edges ($q=1$) and non-edges ($q=0$) in $(\Ydr, \Ydr^{(\ell)}_\sigma)$ for $\ell=1,2$ and we choose $\sigma=\mathfrak{id}$ the identity and $\sigma=\pi$ respectively in \eqref{eq:joint1}; $(n_1,\dots,n_{k(\pi)})$ are the cycle lengths of $\pi$. Since we consider the undirected case with no self-loops, only the upper-triangular part of the adjacency matrices is modeled, while we can set $\biy_{vu}=\biy_{uv}$ for $1\leq u<v\leq n$ and $\biy_{vv}:=0$ for $1\leq v\leq n$. Same for $\bY^{(1)},\bY^{(2)}_{\bpi}$.

\vspace{-18pt}

\section{Posterior inference}\label{sec:post}

In order to obtain samples from the posterior distribution of the random permutation $\bpi$ given the observed networks $\bY^{(1)}$ and $\bY^{(2)}$ and infer the latent parent network $\bY$ under the model in \eqref{eq:joint1}, we target the posterior
\begin{equation}\label{eq:post}
p(\pi, \Ydr ,\alpha,\beta \mid \Ydr^{(1)}, \Ydr^{(2)})
\end{equation}
We devise a node-wise blocked Gibbs sampler that leverages the sequential construction of the PA-gCRP given in Section \ref{sec:erp}. This greatly simplifies the challenging task of exploring the large space of permutations with an MCMC sampler. Specifically, we modify Definition \ref{def:del} to obtain a concept of \qt{permutation $\pi$ without considering node $v$} which allows node-wise blocked moves: $\mathpzc{d}_v(\pi)$ is a permutation of $[n]\setminus \{v\}$ such that
\begin{equation}\label{eq:delv}
\mathpzc{d}_v(\pi)(u) = \begin{cases}
     \pi(u) &\text{if}~u \neq \pi^{-1}(v) \\
    \pi(v) &\text{if}~u = \pi^{-1}(v)
\end{cases}
\end{equation}
For example $\mathpzc{d}_3\left((143)(2)\right)=(14)(2)$. Similarly, following Definition \ref{def:add}, for any $\sigma$ permutation of $[n]\setminus v$, we consider
\begin{equation}\label{eq:addv}
    \mathcal{A}_v(\sigma):=\left\{\pi\in\mathcal S_n:\mathpzc{d}_v(\pi)=\sigma\right\}
\end{equation}
In particular, $\mathcal{A}_v(\mathpzc{d}_v(\pi))$ is the set of all permutations that can be obtained by deleting node $v$ from the cycle representation of $\pi$ and then re-adding it in one of the $n$ available spots, in line with the sequential procedure described in Section \ref{sec:predictive}. For example $\mathcal{A}_3(\mathpzc{d}_3((143)(2)))=\{(134)(2),(143)(2),(14)(23),(14)(2)(3)\}$.

The sampler iterates the following steps.
\begin{itemize}[noitemsep]
\item[1.] For any node $v$, if $(\bpi,\bY)\in\mathcal S_n\times\{0,1\}^{n\times n}$ is the current state:
\begin{itemize}[noitemsep]
\item[(i)] sample a new permutation $\bpi^*\in\mathcal A_v(\mathpzc{d}_v(\bpi))$, conditionally on all the connections and the error rates;
\item[(ii)] sample sequentially a new row $\by^*_{v\cdot}=(\biy^*_{v1},\dots,\biy^*_{vn})$ for the latent parent network conditionally on the observed networks $\bY^{(1)}$ and $\bY^{(2)}$, the rest of the latent edges $\bY_{-v}$, the permutation $\bpi^*$ and the error rates $\balpha,\bbeta$.
\end{itemize}
\item[2.] Sample $\balpha, \bbeta$ from their full conditionals.
\end{itemize}

For step 1(i), we sample from
\begin{equation}\label{eq:mh}
\begin{split}
p_{\alpha,\beta}(\pi^\star \mid \Ydr, \Ydr^{(1)}, \Ydr^{(2)}, \mathpzc{d}_v(\pi))
\propto \frac{p_{\alpha,\beta}(\Ydr^{(2)} \mid \Ydr, \pi^\star)}{p_{\alpha,\beta}(\Ydr^{(2)}_{-v} \mid \Ydr_{-v}, \mathpzc{d}_v(\pi))} \frac{p(\Ydr \mid \zdr(\pi^\star))}{p(\Ydr_{-v} \mid \zdr(\mathpzc{d}_v(\pi)))} p(\pi^\star \mid \mathpzc{d}_v(\pi))
    \end{split}
\end{equation}
The first factor in \eqref{eq:mh} is the likelihood ratio of edge discrepancy between $\Ydr$ and $\Ydr^{(2)}_{\pi^\star}$ when re-adding $v$ to $\mathpzc{d}_v(\pi)$ and reduces to
\begin{equation}\label{eq:fact1}
(1-\beta)^{e^{(2)}_{1}(\pi^\star)-e^{(2)}_{1}(\mathpzc{d}_v(\pi))}\beta^{\bar e^{(2)}_{1}(\pi^\star)-\bar e^{(2)}_{1}(\mathpzc{d}_v(\pi))}\alpha^{\bar e^{(2)}_{0}(\pi^\star)-\bar e^{(2)}_{0}(\mathpzc{d}_v(\pi))}(1-\alpha)^{e^{(2)}_{0}(\pi^\star)-e^{(2)}_{0}(\mathpzc{d}_v(\pi))} 
\end{equation}
For computational convenience, we notice that for any $\pi^\star\in\mathcal A(\mathpzc{d}_v(\pi))$, setting $u^\star:=\pi^\star(v)$ and $u^{\star-1}:=[\mathpzc{d}_v(\pi)]^{-1}(u^*)$, if $u^\star\neq v$ then
\begin{equation}\label{eq:defstar}
   \begin{cases}
    \pi^\star(u) = \mathpzc{d}_v(\pi)(u) &\text{for all}~u \neq v,u^{\star-1} \\
    \pi^\star(u^{\star-1})=v
\end{cases} 
\end{equation}
while if $u^*=v$ we are just adding $(v)$ to the cycle representation of $\mathpzc{d}_v(\pi)$. Notice that $u^*\in[n]$ determines $\pi^\star$,  in fact $|\mathcal{A}_v(\mathpzc{d}_v(\pi))|=n$. Hence, for $u^\star\neq v$,
\begin{equation}\label{eq:dd}
    e_1^{(2)}(\pi^\star) = e_1^{(2)}(\mathpzc{d}_v(\pi))+\sum_{u=1}^ny_{vu}y^{(2)}_{u^\star \pi^\star(u)}-\sum_{u\neq v}y_{u^{\star-1} u}y^{(2)}_{u^\star \pi^\star(u)}+\sum_{u\neq u^\star} y_{u^{\star-1} u}y^{(2)}_{v\pi^\star(u)} 
\end{equation}
while if $u^*=v$, simply $e_1^{(2)}(\pi^\star) = e_1^{(2)}(\mathpzc{d}_v(\pi))+\sum_{u=1}^ny_{vu}y^{(2)}_{u^\star \pi^\star(u)}$. Similar equalities hold for the rest of the exponents in \eqref{eq:exponents}. Once the common $e_1^{(2)}(\mathpzc{d}_v(\pi))$ is available, one can use \eqref{eq:dd} to obtain the exponent for any $\pi^\star\in\mathcal A_v(\mathpzc{d}_v(\pi))$. In turn, \eqref{eq:dd} can be used at the next node-block move $v'\in[n]$, to obtain the next common component $e_1^{(2)}(\mathpzc{d}_{v'}(\pi))$.

The second factor in \eqref{eq:mh} is the SBM likelihood ratio of the realization $\Ydr$ when re-adding $v$ to the cycle structure. By \eqref{eq:betabin}, it equals
\begin{equation}\label{eq:fact2}
   \prod_{h=1}^{k(\pi^\star)} \frac{\mbox{B}(a_\xi+m^{-v}_{\zdr_v(\pi^\star)h}+r^\star_{vh},b_\xi+\overline{m}^{-v}_{\zdr_v(\pi^\star)h}+\overline{r}^\star_{vh})}{\mbox{B}(a_\xi+m^{-v}_{\zdr_v(\pi^\star)h},b_\xi+\overline{m}^{-v}_{\zdr_v(\pi^\star)h})}
\end{equation}
where $r^\star_{vh}$ and $\overline{r}^\star_{vh}$ are the total number of $\Ydr$-edges and $\Ydr$-non-edges, respectively, between node $v$ and all the nodes in the $h$-th cycle of $\pi^\star$.

The third factor is given by the predictive scheme of an exchangeable random permutation defined in Proposition \ref{prp:pred}
\begin{equation}\label{eq:fact3}
p(\pi^\star\mid\mathpzc{d}_v(\pi))=\begin{cases}
        \frac{1}{n_j^{-v}}\frac{\varphi_{k^{-v}}^{(n-1)}(n^{-v}_1,\dots,n^{-v}_j+1,\dots,n^{-v}_{k^{-v}})}{\varphi_{k^{-v}}^{(n-1)}(n_1^{-v},\dots,n^{-v}_{k^{-v}})}&1\leq j\leq k^{-v} \\
    
\frac{\varphi_{k^{-v}+1}^{(n)}(n^{-v}_1,\dots,n^{-v}_{k^{-v}},1)}{\varphi_{k^{-v}}^{(n-1)}(n^{-v}_1,\dots,n^{-v}_{k^{-v}})}&j=k^{-v}+1    
    \end{cases}
\end{equation}
where $j$ is the cycle of $\pi^\star$ where $v$ is added and $(n^{-v}_1,\dots,n^{-v}_{k^{-v}})$ are the cycle lengths of $\mathpzc{d}_v(\pi)$.

To initialize $\bpi$ we propose to sample uniformly among the permutations with cycle structure given by the partition obtained fitting to $\bY^{(1)}$ an unsupervised extended stochastic block model \citep[ESBM,][]{legramanti2022extended} with a Gibbs-type prior induced by $(\varphi^{(n)}_k)_{k=1}^n$.

For step 1(ii), \emph{i.e.} the sequential sampling of new latent connections of node $v$, we have
\vspace{-12pt}
\begin{equation}\label{eq:probedge}\mathds{P}_{\bpi,\balpha,\bbeta}\left[\biy_{vu}=1\left|\biy_{v1},\dots,\biy_{vu-1},\bY^{-v},\bY^{(1)},\bY^{(2)}\right.\right]=\frac{\mathfrak{p_1}}{\mathfrak{p_0}+\mathfrak{p_1}}
\end{equation}
where
\vspace{-12pt}
\begin{equation}\label{eq:prob1}
    \mathfrak{p_1}=(1-\bbeta)^{\biy^{(1)}_{vu}+\biy^{(2)}_{\bpi(v)\bpi(u)}}\bbeta^{(1-\biy^{(1)}_{vu})+(1-\biy^{(2)}_{\bpi(v)\bpi(u)})}\mbox{B}(1;a_\xi+\bim^{-vu}_{\zdr_v(\bpi)\zdr_u(\bpi)}+1,b_\xi+\overline{\bim}^{-vu}_{\zdr_v(\bpi)\zdr_u(\bpi)})
\end{equation}
\vspace{-12pt}
and
\begin{equation}\label{eq:prob0}  \mathfrak{p_0}=\balpha^{\biy^{(1)}_{vu}+\biy^{(2)}_{\bpi(v)\bpi(u)}}(1-\balpha)^{(1-\biy^{(1)}_{vu})+(1-\biy^{(2)}_{\bpi(v)\bpi(u)})}\mbox{B}(1;a_\xi+\bim^{-vu}_{\zdr_v(\bpi)\zdr_u(\bpi)},b_\xi+\overline{\bim}^{-vu}_{\zdr_v(\bpi)\zdr_u(\bpi)}+1)
\end{equation}
with $\bim_{jh}^{-vu}$ and $\overline{\bim}_{jh}^{-vu}$ denoting the number of $\bY$-edges and $\bY$-non-edges, respectively, between cycles $j$ and $h$, ignoring the interactions between node $v$ and nodes $u' \geq u$.

For step 2, the full conditional distributions of $\balpha,\bbeta$ are given by
\begin{equation}\label{eq:alphabeta}
\begin{aligned}
\balpha\mid\bpi,\bY,\bY^{(1)},\bY^{(2)}&\sim\mbox{tbeta}(1/2;a_0+\overline \bie^{(1)}_{0}(\mathfrak{id})+\overline \bie^{(2)}_{0}(\bpi),b_0+\bie^{(1)}_0(\mathfrak{id})+\bie^{(2)}_{0}(\bpi))\\
\bbeta\mid\bpi,\bY,\bY^{(1)},\bY^{(2)}&\sim\mbox{tbeta}(1/2;a_1+\overline \bie^{(1)}_{1}(\mathfrak{id})+\overline \bie^{(2)}_{1}(\bpi),b_1+\bie^{(1)}_1(\mathfrak{id})+\bie^{(2)}_{1}(\bpi))
\end{aligned}
\end{equation}

Finally, the predictive probabilities in \eqref{eq:fact3} depend on hyperparameters (\emph{e.g.} the concentration parameter $\theta>0$ in the Dirichlet process case, detailed in Example \ref{ex:dirichlet}) that may be given hyperpriors. For the simulations in Section \ref{sec:sim}, we employ a position-aware Dirichlet process prior with a gamma-distributed concentration parameter and we handle the conditional sampling of $\theta$ with a classic variable augmentation \citep{escobar1995bayesian}. Metropolis-Hastings steps can be added for placing hyperpriors on the parameters of different exchangeable random permutation distributions (see Examples \ref{exe:nsp}, \ref{exe:pyp}, \ref{exe:gnp}).

\begin{algorithm}[t]
	\caption{Node-wise blocked Gibbs sampler}\label{alg:1}
	\begin{algorithmic}
\footnotesize
\State Select EPerPF among Examples \ref{ex:dirichlet}, \ref{exe:nsp}, \ref{exe:pyp}, \ref{exe:gnp}
\State Initialization
\State\hspace{\algorithmicindent}$\bz$: partition from fitting an ESBM \citep{legramanti2022extended} on $\bY^{(1)}$
\State\hspace{\algorithmicindent}$\bpi$: random over permutations with cycle structure $\bz$
\State\hspace{\algorithmicindent}$\bY$: Bernoulli with probabilities $(\bY^{(1)}+\bY_{\bpi}^{(2)})/2$ \State\hspace{\algorithmicindent}$\balpha,\bbeta$: truncated beta priors
		\For{$s=1, \ldots, n_{\text{iter}}$ ($n_{\text{iter}}$: number of Gibbs iterations)}
		\For{$v\in\texttt{sample}(1 \ldots, n)$ ($n$: total number of nodes)} 
		\State remove node $v$: compute $\be^{(2)}(\mathpzc{d}_v(\bpi)),\overline \be^{(2)}(\mathpzc{d}_v(\bpi))$ from \eqref{eq:dd}, and $\bemme^{-v},\overline \bemme^{-v}$
        \For{$\bpi^\star\in\mathcal A(\mathpzc{d}_v(\bpi))$ in \eqref{eq:delv}-\eqref{eq:addv}} 
		\State 1. compute $\be^{(2)}(\bpi^\star),\overline \be^{(2)}(\bpi^\star)$ in \eqref{eq:dd} and the discrepancy likelihood ratio in \eqref{eq:fact1}
		\State 2. compute $\br_v, \overline{\br}_v$ and the sbm likelihood ratio in \eqref{eq:fact2}
		\State 3. compute the predictive probabilities in \eqref{eq:fact3} according to the selected EPerPF
        \EndFor
        \State sample new $\bpi$ from a multinomial with probabilities \eqref{eq:mh} 
		\For{$u=1,\dots,v-1$}
		\State 1. compute $\bemme^{-uv},\overline\bemme^{-uv}$
		\State 2. sample new $\by_{uv}$ from \eqref{eq:probedge}-\eqref{eq:prob1}-\eqref{eq:prob0}

		\EndFor
        \State compute the full $\bemme,\overline\bemme,\be^{(2)}(\bpi),\overline \be^{(2)}(\bpi)$
		\EndFor
		\State sample new $\balpha,\bbeta$ from \eqref{eq:alphabeta}
        \If{hyeperpriors employed} sample new hyperparameters (variable augmentation or MH step)
        \EndIf
		\EndFor
	\end{algorithmic}
\end{algorithm}

From this procedure, detailed in Algorithm \ref{alg:1}, one obtains samples $(\pi_s)_{s=1}^S$ and $(\Ydr_{(s)})_{s=1}^S$ from target posterior distribution \eqref{eq:post}. Explicit derivations of the full conditionals are given in Section \ref{sec:fullco}.

\vspace{-13pt}

\section{Posterior summary}\label{sec:summary}

Once a sample $(\pi_s)_{s=1}^S$ from the posterior distribution of $\bpi$ is obtained through Algorithm \ref{alg:1}, one is left with the non-trivial task of summarizing posterior information, determining a point estimate $\hat\pi$, and communicating its uncertainty. The calculation of Fr\'echet means in $\mathcal S_n$ becomes computationally unfeasible even for moderate values of $n$. In particular, when the permutations are interpreted as rankings (\emph{i.e.} the ordering of their image matters), this coincides with the problem of finding a consensus ranking \citep{pmlr}, and is related to the Kemeny ranking problem, which is known to be NP-complete \citep{complexkemeny}.
We propose \emph{perSALSO}, a greedy approach that follows the sequential construction of our model and is based on an extension of SALSO \citep{salso} to the space of permutations.

\subsection{perSALSO}\label{sec:persalso}
As in SALSO, we propose a greedy search based on the local minimization of the Monte Carlo estimate of the posterior expected distance, that is
\begin{equation}\label{eq:postexpdist}
    f_C(\sigma):=\frac{1}{S}\sum_{s=1}^Sd_C(\pi_s,\sigma)\approx\cmean{d_C(\bpi,\sigma)}{\bY^{(1)},\bY^{(2)}}
\end{equation}
In particular we employ the \emph{Cayley distance} $d_C(\cdot,\cdot)$, which is defined as the minimum number of transpositions, \emph{i.e.} $\rho\in\mathcal S_n$ with $t(\rho)=(n-1,1)$, needed to obtain one permutation from another by composition. Cayley distance is a convenient choice in our context because it is \emph{bi-invariant} (see Section \ref{sec:dist}) and can be computed easily, due to Cayley's identity
$d_C(\pi,\sigma)= n - k(\sigma^{-1}\cdot\pi)$,
which saves the minimization step in the definition. Nonetheless, its use is suboptimal for posterior samples obtained with our model, since similarity of the cycle structure does not generally imply lower Cayley distance. We discuss this in Section \ref{sec:dist}. 

Our perSALSO features sequential moves in line with our generative model and, as the original algorithm, consists of three phases.
\begin{itemize}[noitemsep]
    \item[(1)] \emph{Initialization}: an initial proposal is constructed sequentially, adding one node at a time, minimizing the posterior expected Cayley distance at each step. 
    \item[(2)] \emph{Sweetening}: each node is removed and re-added in the position that minimizes the posterior expected Cayley distance. This phase terminates when a full pass over the nodes does not lead to an improvement in terms of posterior expected Cayley distance. 
    \item[(3)] \emph{Zealous updates}: for a fixed number of times, an entire cycle is removed and its nodes are re-added sequentially, as in the initialization phase.
\end{itemize}
Both phases (1) and (3) require the concept of distance between permutations of two different sets of objects, in particular one being a subset of the other. Here, our approach matches the constructive mechanism described in Section \ref{sec:erp}: to quantify the distance between a permutation $\sigma$ of $\mathcal{C}$ and a permutation $\sigma'$ of $\mathcal{C'}$ with $\mathcal C\subset\mathcal{C'}\subseteq[n]$, we project $\sigma'$ on $\mathcal{C}$ using a composition of deletions of the kind in \eqref{eq:delv}, and then consider the Cayley distance between such projection and $\sigma$.

Specifically, for phase (1) we proceed as follows. 
If $\sigma\in\mathcal S_n$, let $\mathpzc{d}_\mathcal{C}(\sigma):=\mathpzc{d}_{v_1}\cdots \mathpzc d_{v_C}(\sigma)$ for any $\mathcal C=\{v_1,\dots,v_C\}\subset[n]$. For any
 random ordering of the nodes $\brho\in\mathcal{S}_n$, let $\mathcal V^{(\brho)}_i:=\{{\brho(1)},\dots,{\brho(i)}\}\subseteq[n]$ for any $i=1,\dots,n$ and $\mathcal V_0^{(\brho)}=\varnothing$. Then we find 
\begin{equation}\label{eq:init}
 \hat \pi_{i}^{(1)}=\argmin_{\displaystyle\sigma\in\mathcal{A}_{\brho(n-i+1)}(\hat\pi_{i-1}^{(1)})}\; \frac{1}{S}\sum_{s=1}^Sd_C(\mathpzc{d}_{\mathcal{V}^{(\brho)}_{n-i}}(\pi_s),\sigma)
\end{equation}
for $i=2,\dots,n$, while $\hat \pi^{(1)}_1=({\brho(n)})$. Our initialization is then $\hat\pi^{(init)}:=\hat\pi^{(1)}_n$. In words, we start from the trivial permutation of one node, and at every step we choose, among the permutations obtainable adding an unmatched node to the previous step permutation, the one that minimizes the average Cayley distance from a \emph{pruned} version of the sample only including the already-matched nodes and the newly added one. The (reverse) order in which the nodes are added is given by $\brho$.

In phase (2), we simply find 
\begin{equation}\label{eq:sweet}
     \hat \pi_{i}^{(2)}=\argmin_{\displaystyle\sigma\in\mathcal{A}_{\brho'(i)}(\mathpzc{d}_{\brho'(i)}(\hat\pi_{i-1}^{(2)}))}f_C(\sigma)
\end{equation}
for $i=1,\dots,n$ starting from $\hat\pi^{(2)}_0=\hat\pi^{(init)}$, for some $\brho'\in\mathcal S_n$. If $f_C(\hat\pi^{(2)}_n)>f_C(\hat\pi^{(2)}_0)$ we redefine $\hat\pi^{(2)}_0:=\hat\pi^{(2)}_n$ and re-iterate \eqref{eq:sweet} with a different $\brho''\in\mathcal S_n$. Otherwise, we stop and set $\hat\pi^{(sweet)}:=\hat\pi^{(2)}_n$.

Finally, in phase (3), if $(\mathcal{C}_1,\dots,\mathcal{C}_k)$ are the cycles of $\hat\pi^{(sweet)}$ with lengths $(n_1,\dots,n_k)$, we choose one randomly, say the $\bij$-th, and, similarly to \eqref{eq:init}, we find
\begin{equation}\label{eq:zeal}
 \hat \pi_{i}^{(3)}=\argmin_{\displaystyle\sigma\in\mathcal{A}_{\brho_\bij(n_\bij-i+1)}(\hat\pi_{i-1}^{(3)})} \; \frac{1}{S}\sum_{s=1}^Sd_C(\mathpzc{d}_{\mathcal{V}^{(\brho_\bij)}_{n_\bij-i}}(\pi_s),\sigma)
\end{equation}
for $i=2,\dots,n_\bij$, with $\hat\pi^{(3)}_1$ being the permutation obtained adding the cycle $(\brho_\bij(n_\bij))$ to $\mathpzc{d}_{\mathcal C_\bij}(\hat\pi^{(sweet)})$, where now $\brho_\bij$ is a random permutation of $\mathcal C_\bij$ and $\mathcal V^{(\brho_\bij)}_i:=\left\{[n]\setminus\mathcal C_\bij\right\}\cup \{{\brho_\bij(1)},\dots,{\brho_\bij(i)}\}\subseteq[n]$ for any $i=1,\dots,n_\bij$.
The minimization in Equation~\eqref{eq:zeal} is repeated, starting from $\hat\pi^{(3)}_{n_\bij}$, for a fixed number of times, with the last giving $\hat\pi^{(zeal)}$. The three phases can be repeated in parallel and the minimizer of $f_C(\cdot)$ among the obtained $\hat\pi^{(zeal)}$'s is our proposed point estimate $\hat\pi$.

The sequential nature of perSALSO allows for computationally feasible exact locally optimal steps, since the search at each step is reduced to at most $n$ evaluations (in phase (2)) of the average Cayley distance from the sample. Empirical validations for a similar greedy approach in the partition framework have been given, \emph{e.g.} in \cite{Ras(18)} and \cite{salso}.
To further cut down the complexity and leverage the specific nature of our posterior sample, in Section \ref{sec:speed} we provide a fast two-step approximation that first gives a point estimate of the cycle structure and then employs it to reduce the search at each step to $n_j$ evaluations even in phase (2). A detailed pseudocode is given in Section \ref{sec:speed}, together with speed-up techniques to cut the computational cost of each evaluation of $f_C(\cdot)$.

\vspace{-0pt}

\section{Simulations}\label{sec:sim}

In this section, we present the results of our model on simulated data.
We perform posterior inference following Algorithm \ref{alg:1}, specifying a position-aware Dirichlet process EPerPF (see Example \ref{ex:dirichlet}), with a gamma hyperprior on the concentration parameter.
For each simulation, we run our sampler for 10,000 iterations.
We compute the posterior summary described in Section~\ref{sec:summary} with a burn-in of 2,000 iterations and thinning every 10 samples.

We evaluate our performance using three metrics:
(i) Cayley distance between inferred and true permutation;
(ii) number of edge discrepancies between the aligned networks, \emph{i.e. Frobenius distance} between the aligned adjacency matrices $\Vert \bY^{(1)} - \bY^{(2)}_{\hat \bpi}\Vert_F$, which is a common objective function for the inexact graph matching problem;
and (iii) \textit{normalized mutual information} (NMI) between inferred and true communities, which measures partition agreement and ranges from 0 to 1.
In all figures, metrics (i) and (ii) are normalized for comparison across different networks.
For posterior uncertainty, one can consider the posterior expected values of these metrics. 

We compare our model to an indefinite relaxation approach \citep{qiao2021igraphmatch} and matching via degree profiles \citep{ding2021efficient}.
We refer to these as \textit{Relaxed} and \textit{Degree Profiles}, respectively.
An implementation of our model, \textit{Bayes}, along with the files to recreate our simulations, is available at the GitHub repository \texttt{francescogaffi/BayesGM}.

\vspace{-7pt}

\subsection{Mixing and uncertainty quantification}

We begin by analyzing a single run from our model to evaluate MCMC mixing, matching and parent network recovery, and uncertainty quantification.
We sample a parent network $\bY$ from an SBM with block probabilities given in Figure~\ref{fig:sim_sbm}, mimicking the communities found in the complex criminal network analyzed in  \cite{legramanti2022extended} and \cite{durante2025partially}, that include core-periphery, assortative, and disassortative block structures.
In particular, the parent network has $n = 40$ nodes with $k = 7$ communities and edge probabilities ranging from $0.1$ to $0.9$.
We then sample two noisy networks $\bY^{(1)}$ and $\bY^{(2)}$ following the cSBM framework with $\alpha = \beta = 0.01$.
Finally, we sample a random permutation with cycles given by the community labels.

\begin{figure}[t]
    \centering
    \includegraphics[width=\linewidth]{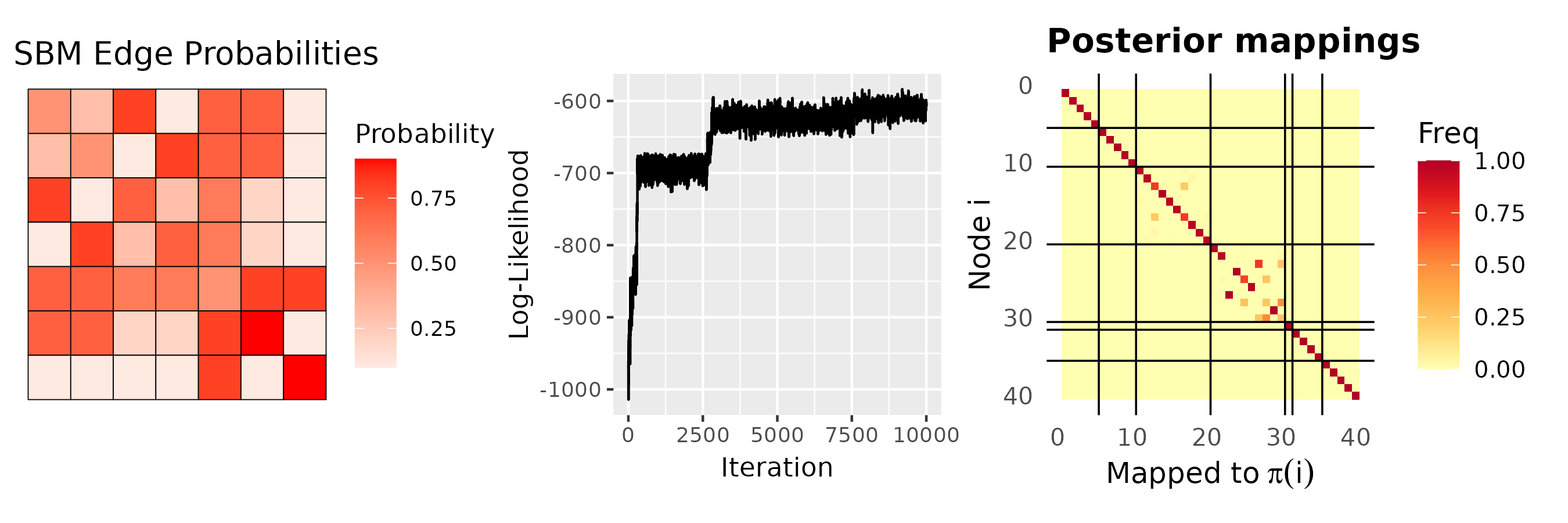}
    \caption{(Left) Block matrix of probabilities for underlying SBM. (Center) Trace plot of log-likelihood. (Right) Posterior mapping frequency, where the nodes on the x-axis are permuted through $\pi^*$.}
    \label{fig:sim_sbm}
\end{figure}

Figure~\ref{fig:sim_sbm} shows the trace plot of the log-likelihood over 10,000 iterations.
Our sampler quickly identifies a mode with much higher likelihood than the initialization.
After about 2,500 iterations, it jumps to another mode, and then once more around the 7,500th iteration.
Throughout, the chain mixes well. Table~\ref{tab:sbm_sim} collects the results for this simulation.
The Cayley distance is 2, meaning there are just two transpositions needed to transform our inferred permutation to the true permutation, and closely matches the posterior expected Cayley distance of 2.1.
Similarly, the average Frobenius norm for our inferred permutation (2.3) is close to the true permutation (0.7), \emph{i.e.} our alignment has a similar number of edge discrepancies.
As Figure~\ref{fig:sim_sbm} also shows, the true cycle structure is learned perfectly, implying an NMI of 1.
In other words, the only nodes that are assigned incorrectly are still assigned to the correct cycle.
Importantly, these errors correspond to where our model is most uncertain.
Finally, an AUC of 0.97 shows almost perfect reconstruction of latent parent network.

\begin{table}[h]
\centering
\begin{adjustbox}{width=1\textwidth}
\begin{tabular}{|c|c|c|c|c|c|}
\hline
 $d_C(\hat\pi,\pi^*)$ & $\cmean{d_C(\bpi,\pi^*)}{\bY^{(1)},\bY^{(2)}}$ &  $\|\bY^{(1)}-\bY^{(2)}_{\hat\pi}\|_F$
 &$\|\bY^{(1)}-\bY^{(2)}_{\pi*}\|_F$  & $\mbox{NMI}(\zdr(\hat\pi),\zdr(\pi^*))$ & \textbf{AUC} \\ \hline
\multicolumn{1}{|r|}{2} & \multicolumn{1}{r|}{2.1} & \multicolumn{1}{r|}{0.7} & \multicolumn{1}{r|}{2.3} & \multicolumn{1}{r|}{1} & \multicolumn{1}{r|}{0.97} \\ \hline
\end{tabular}
\end{adjustbox}
\caption{Cayley distance between the point estimate $\hat\pi$ and $\pi^*$, expected posterior Cayley distance from $\pi^*$, Frobenius distance between the observed matrices aligned with $\hat\pi$, to be compared to the oracle Frobenius distance with $\pi^*$ alignment, NMI between the cycle structure of the point estimate $\zdr(\hat\pi)$ and the cycle structure of $\pi^*$, the area under the ROC curve for edge reconstruction of the latent parent network.}
\label{tab:sbm_sim}
\end{table}

\vspace{-18pt}

\subsection{Number of nodes}

We next evaluate the performance of our method as the number of nodes increases.
We set the true permutation as $\pi^* = (2, \ 3, \ \cdots, \ n/2, \ 1)(n/2 + 2, \ \cdots, \ n, \ n/2 + 1)$.
Hence, the true cycle structure is $\zdr(\pi^*) = (1, \ldots, 1, 2, \ldots, 2)$.
We then sample $\bY$ from an SBM with communities given by $\zdr(\pi^*)$, where the probabilities of an edge within community and between communities are 0.6 and 0.1, respectively.
Finally, we sample $\bY^{(1)}$ and $\bY^{(2)}$ from a cSBM with $\alpha = \beta = 0.05$.

\begin{figure}[h]
    \centering
    \includegraphics[width=\linewidth]{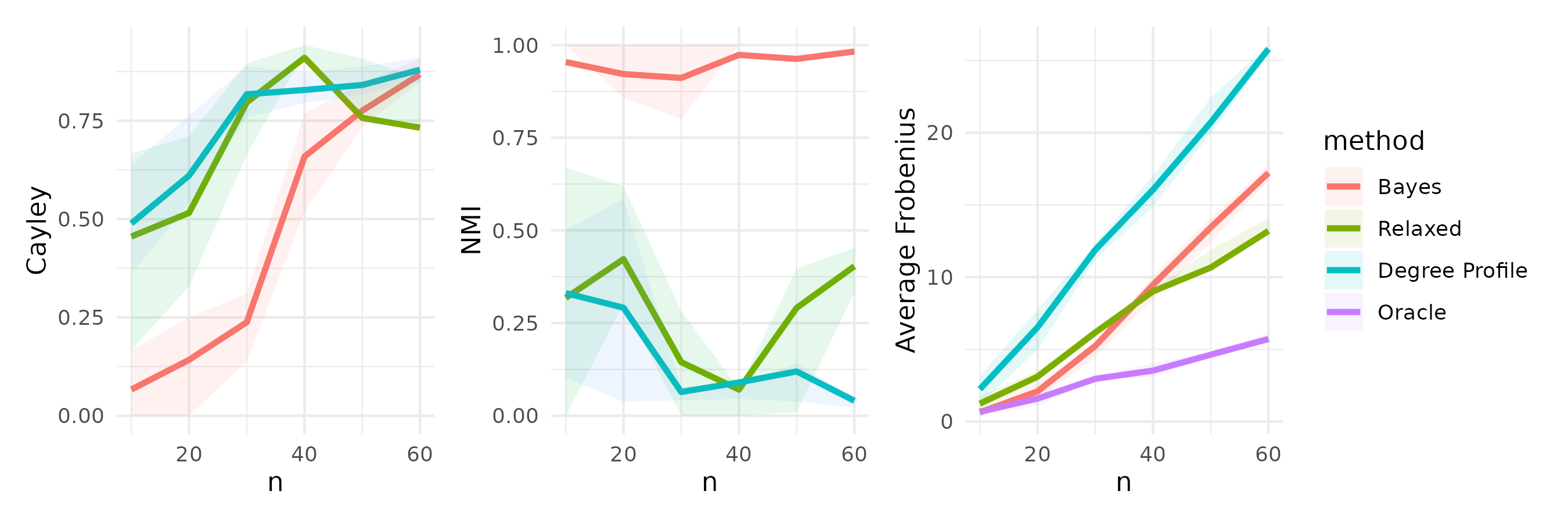}
    \caption{Simulation results as the number of nodes increases.}
    \label{fig:sim_scale}
\end{figure}

Figure~\ref{fig:sim_scale} shows the results over 10 replicates as the number of nodes increases from $n = 10$ to $60$.
First, we see that for small networks, \emph{Bayes} consistently recovers a permutation that is much closer, in Cayley distance, to the true permutation than the existing methods.
Second, we see that \emph{Bayes} finds a permutation with a similar number of edge discrepancies as \emph{Relaxed}, whose objective function is directly related to this minimization.
This is noteworthy because: (i) our model minimizes edge discrepancies without being explicitly designed to do so and (ii) the two inferred permutations 
have a similar number of edge discrepancies despite being very different.
The latter point can be seen by observing the difference in Cayley distance to the truth, but also the difference in NMI: our model gets the cycle structure almost exactly correct (NMI close to 1) while the other methods learn permutations with cycle structures that depart substantially from the true one.

\subsection{Noise level}

We repeat the previous simulation study except instead of increasing the number of nodes, we increase the noise level, fixing $n = 30$ and letting $\beta$ vary from $0.01$ to $0.4$.
We assume edge unbiasedness, \emph{i.e.} the expected number of observed edges is equal to the number of parent edges.
Hence, we set $\alpha =  \frac{\beta|\mathcal E|}{\binom{n}{2} - |\mathcal E|}$, where $\mathcal E$ is the set of parent edges.

\begin{figure}[h]
    \centering
    \includegraphics[width=\linewidth]{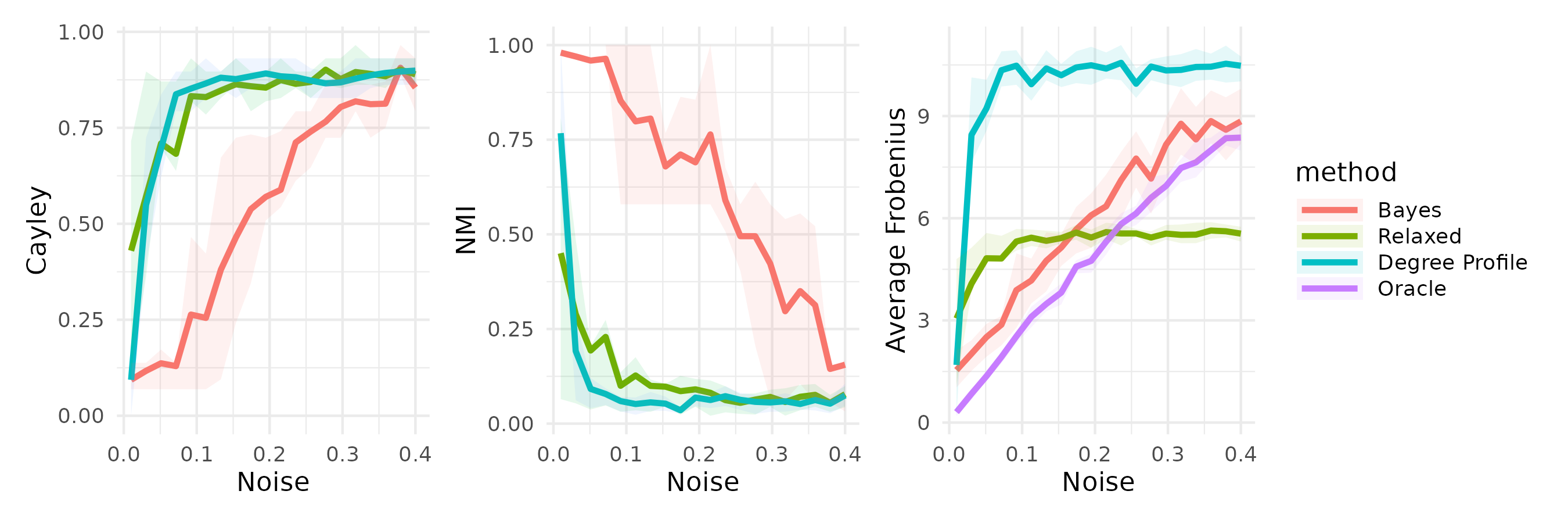}
    \caption{Simulation results as the cSBM noise level increases.}
    \label{fig:sim_corr}
\end{figure}

We plot the results in Figure \ref{fig:sim_corr} where each noise level have 28 replicates.
As expected, we see that all the methods perform worse as the noise level increases (left to right).
However, in terms of Cayley distance, our method decreases linearly with the noise while the other methods do exponentially worse.
We also see that \emph{Bayes} closely tracks the oracle rate in terms of Frobenius distance, whereas \emph{Relaxed} underfits at low noise and overfits at high noise levels.
As before, neither of the other methods infers a cycle structure that closely matches the true community structure.

\subsection{Concurrence of cycle and community structures}

Finally, we assess the robustness of our model to the assumption that the permutation cycles align with the underlying community structure.
To achieve this, we consider the same type of permutation $\pi^\star$ on $n = 21$ nodes as in the previous simulations except with 3 equally-sized cycles.
Hence, the true cycle structure is $\zdr(\pi^\star) = (1, \ldots, 1, 2, \ldots, 2, 3, \ldots, 3)$.
We then create a sequence of community structures that deviates from this cycle structure. In particular, we start with $\bz^{(0)} = \zdr(\pi^\star)$.
Then, given $\bsigma\in\mathcal S_n$, we define $\bz^{(u)}$ recursively as
$\biz^{(u)}_v =
        \biz^{(u-1)}_v$ for $v \neq \bsigma(u)$ and
        $\biz^{(u)}_{\bsigma(u)}=\bix^{(u)}
$
for $u\in[n]$, where $\bix^{(u)}$ is sampled uniformly from the unique values of $\bz^{(u-1)}_{-\bsigma(u)}$ plus a new community label.
In other words, at each step, we reallocate $\bsigma(u)$ to a different community, or to a new one.
This results in a sequence $\bz^{(0)}, \ldots, \bz^{(n)}$ that slowly deviates from the true cycle structure $\zdr(\pi^\star)$.
We sample a parent network with communities given by $\bz^{(u)}$ and two noisy children with $\alpha = \beta = 0.05$.
We repeat this 28 times for each $\bz^{(u)}$.

\begin{figure}[h]
    \centering
    \includegraphics[width=\linewidth]{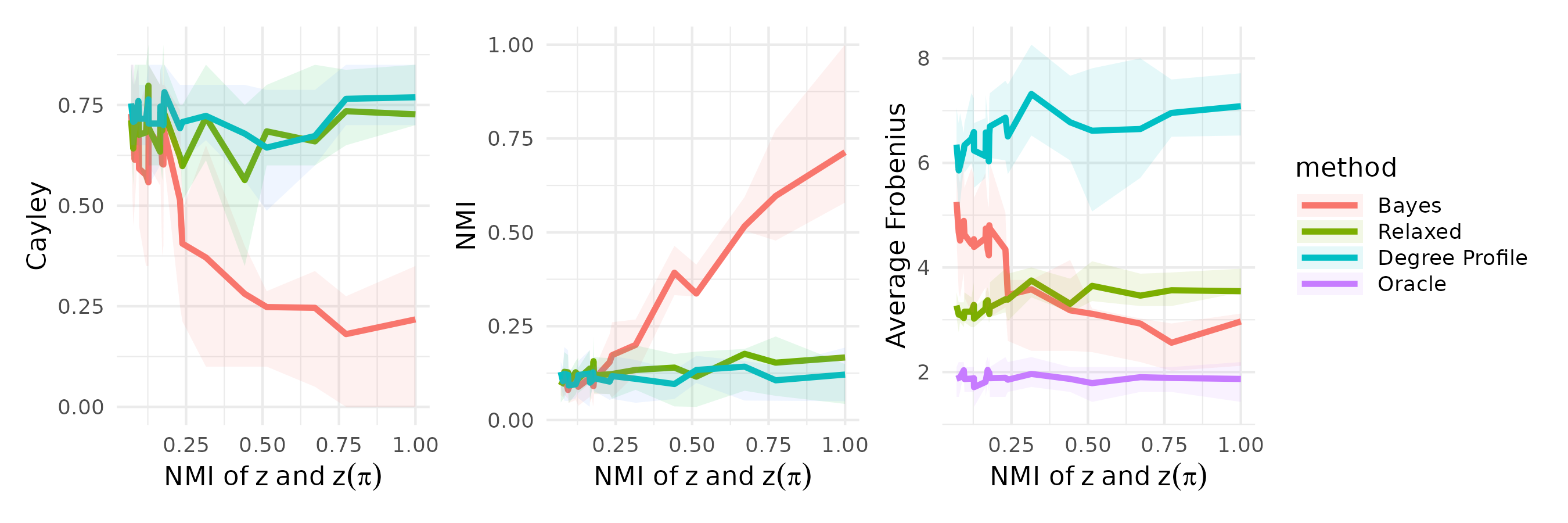}
    \caption{Simulation results as we increase the concurrence of the latent permutation cycle structure and the underlying community structure.}
    \label{fig:sim_robust}
\end{figure}

The results are given in Figure~\ref{fig:sim_robust}.
The x-axis is the NMI between $\bz^{(u)}$ and $\zdr(\pi^\star)$.
Hence, the model becomes more properly specified from left to right.
As we expect, neither \emph{Relaxed} nor \emph{Degree Profile} is sensitive to the overlap in cycle and community structue, while \emph{Bayes} improves in all respects as $\bz$ and $\zdr(\pi^\star)$ become more similar.
Importantly, \emph{Bayes} still outperforms the other methods as long as the NMI exceeds 0.25, which corresponds to moderate agreement.

\section{Conclusions and future research}
\label{sec:conc}

\vspace{-10pt}

This paper develops a unified Bayesian framework for inference on permutations, grounded in a new theory of exchangeable random permutations. We provide a complete probabilistic characterization, a constructive sequential representation, and practical algorithms for posterior inference and summarization. When applied to graph matching, this framework yields a fully Bayesian alternative to optimization-based approaches, delivering coherent uncertainty quantification and competitive alignment performance.

Several directions for future research naturally emerge from this work. First, given its predictive nature, the proposed framework can be further developed for graph matching problems with missing edges, partial alignment, or incomplete overlap between networks. Second, exchangeable permutation priors are immediately applicable to other inferential problems in which permutations arise as latent parameters and the original labelings of the objects do not carry any information, such as record linkage and shuffled regression.

More broadly, while exchangeability provides a principled and tractable foundation, it is of interest to explore extensions beyond exchangeable permutations. Recent contributions show that exchangeable partitions can be obtained by marginalizing over suitably structured non-exchangeable permutations \citep{Gil–Leyva02012023}. These constructions could yield richer permutation prior classes while retaining interpretable marginal behavior at the level of cycle structures. Finally, one could think to  relax the assumption that the cycle structure of the permutation coincides exactly with the latent block structure of the networks, by introducing dependence between the two structures, for instance through conditionally partially exchangeable random partitions \citep{bea}. In such models, block structure would inform, but not deterministically constrain, the permutation cycles, allowing for cross-block matchings with low but nonzero probability. This added flexibility would increase modeling realism, at the cost of more challenging posterior exploration, motivating the development of new computational strategies for navigating complex permutation spaces.

Overall, this work opens the door to a broader class of Bayesian models for permutation-valued parameters, combining probabilistic rigor, interpretability, and algorithmic tractability, with applications well beyond the graph matching setting considered here.

\section*{Acknowledgments}
Lizhen Lin is funded by the NSF grant DMS 2503119. This research was largely conducted while Francesco Gaffi was a Postdoctoral Associate at the Department of Mathematics, University of Maryland, College Park.

\newpage
\begin{center}
\LARGE \bf Supplementary Material
\end{center}
\spacingset{1.8} 

\setcounter{section}{0}
\setcounter{equation}{0}
\setcounter{table}{0}
\setcounter{figure}{0}
\setcounter{algorithm}{0}

\renewcommand{\theequation}{S.\arabic{equation}}
\renewcommand{\thesection}{S\arabic{section}}  
\renewcommand{\thefigure}{S.\arabic{figure}}
\renewcommand{\thetable}{S.\arabic{table}}
\renewcommand{\thealgorithm}
{S.\arabic{algorithm}}

This supplementary material is structured as follows. In Section \ref{sec:ssp} we give a background on species sampling processes, we outline their connection to exchangeable random permutations and provide some special cases of exchangeable permutation probability functions that are included in our implementation. In Section \ref{sec:proofs}, proofs of theoretical results regarding exchangeable random permutations are given. In Section \ref{sec:conditional} we present an alternative model formulation and show the equivalence with the one given in Section \ref{sec:model}. In Section \ref{sec:fullco} we explicitate the derivations of the full conditionals employed in the MCMC algorithm presented in Section \ref{sec:post}. Finally, in Section \ref{sec:point_estim} we give a background on distances in the space of permutations, detail the posterior summary algorithm given in Section \ref{sec:summary} along with a fast approximated version of it.
\vspace{-18pt}
\section{Species sampling processes and EPerPFs}\label{sec:ssp}

As hinted in the main article, the strong connection between exchangeable partitions and species sampling processes (SSP) \citep[see \emph{e.g.}][]{pitman1995exchangeable,Pit(96),franzolini2025} carries over to exchangeable random permutations. A proper SSP is an almost-surely discrete random probability measure $\bP\overset{a.s.}{=}\sum_{h\geq1}\biw_{h}\delta_{\bchi_h}$ such that the atoms $(\bchi_h)_{h\geq1}$ are iid from a non-atomic probability measure $P_0$ on a Polish space, and independent of the random weights $\bw=(\biw_h)_{h\geq1}\sim\mathcal{L_\bw}$, for some distribution $\mathcal{L}_\bw$ on the infinite-dimensional simplex. If $\bP$ is a SSP, a species sampling model (SSM) is an exchangeable sequence $(\bix_n)_{n\geq1}$ directed by $\bP$, \emph{i.e.}, by \cite{de1937prevision},
\begin{equation}
    \label{eq:ssm}
    (\bix_n)_{n\geq1}\mid\bP\overset{iid}{\sim}\bP
\end{equation}
Now, being $\bP$ almost surely discrete, one has ties along the sequence $(\bix_n)_{n\geq1}$ with positive probability. Let $\bx_n^*=(\bix^*_1,\dots,\bix^*_{\bk_n})$ denote the unique values of $(\bix_1,\dots,\bix_n)$, in order of appearance. One can consider the sequence of random partitions $(\bz^{(n)})_{n\geq1}$ defined as
\begin{equation*}\label{eq:partition} 
\biz^{(n)}_{i}=j \quad \mbox{if and only if} \quad \bix_{ i}=\bix^*_{j}, \qquad \text{for }\, j\in[\bk_n],\,n\geq1
\end{equation*}
which is a consistent sequence of exchangeable random partitions, whose distribution is determined by an EPPF $(\varphi^{(n)})_{n\geq1}$. It is clear that the EPPF is completely identified by the law of the weights $\mathcal{L}_\bw$. In particular, by Theorem \ref{thm:char}, we can obtain an EPerPF as 
\begin{equation}\label{eq:moments}
p^{(n)}_k(n_1,\dots,n_k)=\frac{1}{\prod_{j=1}^k(n_j-1)!}\sum_{h_1\neq\dots\neq h_k}\mean{\prod_{j=1}^k\biw_{h_j}^{n_j}}
\end{equation}
for any $(n_1,\dots,n_k)$ with $\sum_{j=1}^kn_j=n$, any $k\in[n]$ and any $n\geq1$. Hence, when $\mathcal{L}_\bw$ is determined, one has a SSP, and in turn a permutation structure.

In the context of SSMs and random partition models, the most popular choice for $\mathcal{L}_\bw$ is the so-called Griffiths-Engen-McCloskey (GEM) distribution, which results in the original Chinese restaurant process (CRP) distribution for $(\bz^{(n)})$ and in $\bP$ having a Dirichlet process distribution \citep{Fer(73)}. The permutation structure deriving from this choice is detailed in Example \ref{ex:dirichlet}, where it is highlighted that under such distributions, the probability mass assigned to a permutation does not depend on the lengths of its cycles, but just on the number of cycles. Namely, in the position-aware Chinese restaurant metaphor, a new customer takes one of the seats already available at old tables \emph{equiprobably} across tables. This feature may be limiting in some scenarios, therefore it may be sensible to go beyond the Dirichlet process and explore other distributions for the random weights. To obtain them, one can work on the random probability measure $\bP$ as done in the literature on normalized random measures with independent increments \citep[NRMIs, see][]{Reg(03),Lijoi05,Lijoi07}, or address directly the construction of the random weights with stick-breaking techniques \citep{sethuraman94}. Here we list some possible choices for the distribution of $\bP$, which all result in Gibbs-type random partition distributions \citep{gnedinpitman,deblasi2015}, and give the corresponding exchangeable permutation probability function (EPerPF) and the predictive scheme of the related position-aware generalized Chinese restaurant process (PA-gCRP). 

\begin{exe}[Normalized stable process]\label{exe:nsp} \normalfont Introduced in \cite{Kin(75)}, the normalized stable process can be obtained by normalizing a homogeneous \emph{completely random measure} \citep[CRM, see][]{Kin(67),Reg(03)} whose jump distribution is given by a stable law. When $\bP$ in \eqref{eq:ssm} has such distribution, the corresponding EPerPF in \eqref{eq:moments} is
$$p_k^{(n)}(n_1,\dots,n_k)=\frac{\alpha^{k-1}(k-1)!}{(n-1)!}\prod_{j=1}^k\frac{\Gamma(n_j-\alpha)}{\Gamma(1-\alpha)(n_j-1)!}$$
for some $\alpha\in(0,1)$. The predictive scheme is
$$p(\pi_{n+1}\mid\pi_n)=\frac{1-\alpha/n_j}{n}\mathbbm{1}_{\{j\leq k\}}+\frac{k\alpha}{n}\mathbbm{1}_{\{j=k+1\}}$$
where $k=k(\pi_n)$ and $j=z_{n+1}(\pi_{n+1})$.
The predictive probability of creating a singleton grows linearly with the current number of cycles, but it is discounted by the parameter $\alpha$. Moreover, the seats at old tables are not equiprobable across-tables, as in the Dirichlet process case: those in large tables are more likely to be taken.
\end{exe}

\begin{exe}[Pitman-Yor process]\label{exe:pyp} \normalfont
Introduced in \cite{Pit(97)}, the two-parameter Poisson-Dirichlet distribution can be obtained normalizing the polynomial tilting of a stable CRM. The EPerPF of its position-aware version is
$$p_k^{(n)}(n_1,\dots,n_k)=\frac{\Gamma(\theta)}{\Gamma(\theta+n)}\prod_{j=1}^k\frac{(\theta+(i-1)\alpha)\Gamma(n_i-\alpha)}{\Gamma(1-\alpha)(n_i-1)!}$$
for some concentration parameter $\theta>0$, and some $\alpha\in(0,1)$. The predictive scheme is
$$p(\pi_{n+1}\mid\pi_n)=\frac{1-\alpha/n_j}{n+\theta}\mathbbm{1}_{\{j\leq k\}}+\frac{\theta+k\alpha}{n+\theta}\mathbbm{1}_{\{j=k+1\}}$$
where $k=k(\pi_n)$ and $j=z_{n+1}(\pi_{n+1})$.
Clearly, for $\theta=0$ one recovers the position-aware normalized stable of Example \ref{exe:nsp}, and for $\alpha=0$ one has the position-aware Dirichlet process of Example \ref{ex:dirichlet}.
\end{exe}

\begin{exe}[Gnedin process]\label{exe:gnp} \normalfont Introduced in \cite{gnedin2010}, the Gnedin process is obtained as a mixture of a Dirichlet multinomial distribution over the number of clusters $\bk$ through
$\prob{\bk=j}=\gamma\frac{\Gamma(j-\gamma)}{\Gamma(1-\gamma)}$ for some $\gamma\in(0,1)$. The EPerPF of the position aware version is
$$p_k^{(n)}(n_1,\dots,n_k)=\gamma(k-1)\frac{\Gamma(k-\gamma)\Gamma(n+\gamma-k)}{\Gamma(n+\gamma)\Gamma(n)}\prod_{j=1}^kn_j$$   The predictive scheme is
$$p(\pi_{n+1}\mid\pi_n)=\frac{n_j+1}{n_j}\frac{(n-k+\gamma)}{n(n+\gamma)}\mathbbm{1}_{\{j\leq k\}}+\frac{k(k-\gamma)}{n(n+\gamma)}\mathbbm{1}_{\{j=k+1\}}$$
where $k=k(\pi_n)$ and $j=z_{n+1}(\pi_{n+1})$. Among the presented exchangeable random permutations, the Gnedin is the only one featuring a finite (yet random) number of cycles when $n\rightarrow\infty$.
\end{exe}

\section{Proofs of Section \ref{sec:erp}}\label{sec:proofs}

\vspace{-15pt}

In this Section we give the proofs of the results stated in Section \ref{sec:erp} of the main article.
\begin{proof}[{\bf Proof of Proposition \ref{prp:cy_str}}]
We have that
\begin{align*}
    \prob{\zdr(\bpi)=(z_{\sigma(1)}(\pi),\dots,z_{\sigma(n)}(\pi))} &= \prob{\zdr(\bpi)=(z_1(\sigma\cdot\pi\cdot\sigma^{-1}),\dots,z_n(\sigma\cdot\pi\cdot\sigma^{-1})))} \\
    &= \prob{\zdr(\bpi)=(z_1(\pi),\dots,z_n(\pi))}
\end{align*}
    where the first equality holds because of Remark \ref{rmk:conjugate} and the second because of finite exchangeability of $\bpi$.
\end{proof}

\begin{proof}[{\bf Proof of Proposition \ref{prp:cons}}]
   For any $n\geq1$ we have that
\begin{align*}
    (z_1(\bpi_{n+1}),\dots,z_{n}(\bpi_{n+1}))&\overset{a.s.}{=} (z_1(\mathpzc{d}(\bpi_{n+1})),\dots,z_{n}(\mathpzc{d}(\bpi_{n+1}))) \\
    &\overset{d}{=} (z_1(\bpi_{n}),\dots,z_{n}(\bpi_{n}))
\end{align*}
The first equality holds because the projection $\mathpzc{d}(\cdot)$ leaves the cycle structure and the cycle labeling (thanks to the order of appearance) of the first $n$ elements unchanged. The second equality is given by consistency of the sequence $(\bpi_m)_{m\geq1}$ with respect to the projection $\mathpzc{d}(\cdot)$.
\end{proof}

\begin{proof}[{\bf Proof of Theorem \ref{thm:char}}]
In general, the probability mass function (pmf) of a random permutation factorizes as
\begin{equation*}
    p(\pi)=p\left(\pi\mid\zdr(\pi)\right)p\left(\zdr(\pi)\right)
\end{equation*}
If $\bpi$ is exchangeable, then there is a consistent sequence of finitely exchangeable random permutations $(\bpi^\star_m)_{m\geq1}$ such that $\bpi^\star_n\overset{d}{=}\bpi$. By Propositions~\ref{prp:cy_str} and \ref{prp:cons}, $(\zdr(\bpi^\star_m))_{m\geq1}$ is a consistent sequence of finitely exchangeable partitions, and clearly $\zdr(\bpi^\star_n)\overset{d}{=}\zdr(\bpi)$. Hence the marginal distribution of $\zdr(\bpi)$ must be given by an EPPF. Moreover, if two permutations have the same cycle structure, they also have the same cycle type.
Therefore, by Definition~\ref{def:rand_perm}, the conditional distribution of $\bpi$ given $\zdr(\bpi)=\zdr$ must be the uniform distribution over all the permutations with $\zdr$ as cycle structure, resulting in the expression in \eqref{eq:pmf}.

Conversely, \eqref{eq:pmf} clearly entails equal probability on permutations with the same cycle type. Hence if $\bpi$ is distributed as \eqref{eq:pmf}, then it is finitely exchangeable. Now, let us consider the sequence $(\varphi^{(m)})_{m\geq1}$. Because of consistency conditions of the kind expressed in \eqref{eq:proj_eppf}, by Kolmogorov existence theorem, there exists a sequence of random partitions $(\bz^\star_m)_{m\geq1}$ whose finite dimensional distributions are determined by $(\varphi^{(m)}_k)_{k\leq m}$ for any $m\geq1$. Define $(\bpi^\star_m)_{m\geq1}$ as the sequence of random permutations that are uniformly distributed conditionally on $(\bz^\star_m)_{m\geq1}$. The law of such sequence is determined by $(p^{(m)})_{m\geq1}$ obtained from $(\varphi^{(m)})_{m\geq1}$ as in \eqref{eq:pmf}, which means that every $\bpi^\star_m$ is finitely exchangeable. Now let $\brho=\mathpzc{d}(\bpi^\star_{n+1})$, we want to show that $\brho\overset{d}{=}\bpi^\star_n$. Let $\rho\in\mathcal{S}_{n}$ with $k$ cycles and cycle lengths given by $\cdr(\rho)=(n_1,\dots,n_k)$. We have
\begin{equation}\label{eq:proj1}
    \prob{\brho=\rho}=\sum_{\sigma\in\mathcal{A}(\rho)}\prob{\bpi^\star_{n+1}=\sigma}
\end{equation}
It is easy to see that in $\mathcal{A}(\rho)$ there are exactly $n_j$ permutations $\sigma$ such that $\cdr(\sigma)~=(n_1,\dots,n_j~+~1,\dots,n_k)$ for any $j\in[k]$, and only one such that $\cdr(\sigma)=(n_1,\dots,n_k,1)$.
Hence \eqref{eq:proj1} equals
\begin{equation*}\label{eq:proj2}
\begin{split}
    \sum_{j=1}^k\frac{n_j}{\prod_{\substack{i=1\\i\neq j}}^k(n_i-1)!n_j!}\varphi^{(n+1)}_k(n_1,\dots,n_j+1,\dots,n_k)
    +~ \frac{1}{\prod_{i=1}^k(n_i-1)!}\varphi^{(n+1)}_{k+1}(n_1,\dots,n_k,1) \\
    =~ \frac{1}{\prod_{i=1}^k(n_i-1)!}\varphi^{(n)}_{k}(n_1,\dots,n_k)=p_k^{(n)}(n_1,\dots,n_k)
    \end{split}
\end{equation*}
by \eqref{eq:proj_eppf}. Since clearly $\bpi^\star_n\overset{d}{=}\bpi$, we proved that if $\bpi$ has pmf given in \eqref{eq:pmf}, then it is exchangeable by Definition \ref{def:experm}.
\end{proof}

\begin{proof}[{\bf Proof of Theorem \ref{thm:eperpf}}]
    The necessity of \eqref{eq:cons} given a sequence of the form in \eqref{eq:pmf} is showed in Proof of Theorem \ref{thm:char}. For sufficiency, let us prove by induction on $n$ that $p^{(n)}$ satisfying \eqref{eq:cons} is a pmf on $\mathcal{S}_n$. Clearly, it is true for $p^{(1)}$. If we assume it is for $p^{(n)}$ we have
    \begin{equation}\label{eq:sumto1}\sum_{\pi\in\mathcal{S}_n}\sum_{\sigma\in\mathcal{A}(\pi)}p_{k(\sigma)}^{(n+1)}(\cdr(\sigma))=1\end{equation}
    It is easy to see that $\mathcal{S}_{n+1}=\bigcup_{\pi\in\mathcal{S}_n}\mathcal{A}(\pi)$ and $\mathcal{A}(\pi)\cap\mathcal{A}(\pi')=\varnothing$ for $\pi\neq\pi'$, hence \eqref{eq:sumto1} gives $\sum_{\sigma\in\mathcal{S}_{n+1}}p_{k(\sigma)}^{(n+1)}(\cdr(\sigma))=1$. Now, we prove that a random partition in $\mathcal{S}_n$ with pmf defined in \eqref{eq:cons} is finitely exchangeable. For any $\pi,\sigma\in\mathcal{S}_n$, if $k(\pi)=k$, we have
    \begin{equation*}
        p^{(n)}_{k}(\cdr(\sigma\cdot\pi\cdot\sigma^{-1}))=p^{(n)}_{k}(c_{\rho(1)}(\pi),\dots,c_{\rho(k)}(\pi))=p^{(n)}_{k}(\cdr(\pi))
    \end{equation*}
    where the first equality holds because $\tdr(\pi)=\tdr(\sigma\cdot\pi\cdot\sigma^{-1})$ and Remark \ref{rmk:permind}, while the second holds because $p^{(n)}$ is symmetric. By Remark \ref{rmk:finex}, we know that finite exchangeability implies the following decomposition of the pmf
    \begin{equation}\label{eq:decom}
    p^{(n)}_{k(\pi)}(\cdr(\pi))=\frac{1}{\prod_{j=1}^{k(\pi)}(c_j(\pi)-1)!}f^{(n)}(\cdr(\pi))
    \end{equation}
    where $f^{(n)}$ is a symmetric function of $\cdr(\pi)$ that gives the probability that the cycle structure of the random permutation has those lengths. Now, substituting \eqref{eq:decom} in \eqref{eq:cons}, and using, as in Proof of Theorem \ref{thm:char}, that in $\mathcal{A}(\pi)$ there are exactly $c_j(\pi)$ permutations $\sigma$ such that $c_i(\sigma)=c_i(\pi)$ for any $i\neq j$ and $c_j(\sigma)=c(\pi)+1$ for any $j=1,\dots,k(\pi)$, and one permutation such that $k(\sigma)=k(\pi)+1$ and $c_{k(\sigma)}(\sigma)=1$, we have
    \begin{align*}
&f^{(n)}(c_1(\pi),\dots,c_{k(\pi)}(\pi))=\\
    \sum_{j=1}^k&f^{(n+1)}(c_1(\pi),\dots,c_j(\pi)+1,\dots,c_{k(\pi)}(\pi))
    +~f^{(n+1)}(c_1(\pi),\dots,c_{k(\pi)},1) 
\end{align*}
which, together with symmetry and the obvious condition $f^{(1)}(1)=1$, implies that $(f^{(n)})_{n\geq1}$ is an EPPF \citep[see][]{Pit(96)}.
\end{proof}

\begin{proof}[{\bf Proof of Proposition \ref{prp:pred}}]
  By Theorem~\ref{thm:char}, the pmf of $\bpi_n$ is given in \eqref{eq:pmf} for any $n\geq1$, while we have
    \begin{equation}\label{eq:predproof1}
        \cprob{\bpi_{n+1}=\sigma}{\bpi_n=\pi} =\frac{\prob{\bpi_{n+1}=\sigma,\bpi_n=\pi}}{\prob{\bpi_n=\pi}}=\frac{\prob{\bpi_{n+1}=\sigma}}{\prob{\bpi_n=\pi}}\mathbbm{1}_{\mathcal{A}(\pi)}(\sigma)
    \end{equation}
    Hence, substituting \eqref{eq:pmf} in \eqref{eq:predproof1} one easily obtains \eqref{eq:pred} for $\sigma\in\mathcal{A}(\pi)$ and $0$ otherwise.
   
    On the other hand, let us determine the joint distribution of a sequence $(\bpi_1,\dots,\bpi_n)$ with $\bpi_i\in\mathcal S_i$ for $i=1,\dots,n$ whose predictive structure is given by \eqref{eq:pred}. It is easy to see that if $\pi_i\neq\mathpzc{d}(\pi_{i+1})$ for any $i=1,\dots,n-1$ then $p(\pi_1,\dots,\pi_n)=0$. Consider instead a sequence $(\pi_1,\dots,\pi_n)$ such that $\pi_i=\mathpzc{d}(\pi_{i+1})$ for any $i=1,\dots,n-1$. Then $p(\pi_n)=p(\pi_1,\dots,\pi_n)$ and we have \begin{equation}
\label{eq:last}
\begin{split}p(\pi_1,\dots,\pi_n)=\cprob{\bpi_n=\pi_n}{\bpi_{n-1}=\pi_{n-1}}\cdots\prob{\bpi_1=\pi_1}=\\\varphi^{(n)}_{k(\pi_n)}(n^{(n)}_1,\dots,n_{k(\pi_n)}^{(n)})\prod_{i=1}^n\frac{1}{(n^{(i)}_{z_i(\pi_i)}-1)\wedge 1}\end{split}\end{equation}
    where $n^{(i)}_j$ is the number of nodes in the $j$-th cycle of $\pi_i$ in order of appearance. There is a one-to-one correspondence between the factors in \eqref{eq:last} and the ones in $\prod_{j=1}^{k(\pi_n)}\frac{1}{(n_j^{(n)}-1)!}$ if one expands the factorials. Hence we recovered the pmf in \eqref{eq:pmf}.
\end{proof}
\vspace{-18pt}

\section{Alternative model formulation}\label{sec:conditional}

It is possible to represent the model given in \eqref{eq:model1} without employing a common parent network.
In this case, $\bY^{(1)}$ is an SBM with a block probability matrix whose entries are constrained in the interval $(\alpha,1-\beta)$, while $\bY^{(2)}$, modulo the permutation, is a noisy observation of $\bY^{(1)}$ given some specific error rates. In fact, marginalizing out the parent network $\bY$ from the model in \eqref{eq:model1}, the entries of $\bY^{(1)}$ and $\bY^{(2)}$ are independent across rows and columns and we have
\begin{equation}\label{eq:jointprob}
\begin{split}
\mathds{P}_{\bXi,\bpi}\left[\biy_{uv}^{(1)}=y,\,\biy_{\bpi(u)\bpi(v)}^{(2)}=y'\right]=\\=
\mathds{P}_{\bXi,\bpi}\left[\biy_{uv}^{(1)}=y,\,\biy_{\bpi(u)\bpi(v)}^{(2)}=y'\mid\biy_{uv}=1\right]\mathds{P}_{\bXi,\bpi}\left[\biy_{uv}=1\right]+\\+\mathds{P}_{\bXi,\bpi}\left[\biy_{uv}^{(1)}=y,\,\biy_{\bpi(u)\bpi(v)}^{(2)}=y'\mid\biy_{uv}=0\right]\mathds{P}_{\bXi,\bpi}\left[\biy_{uv}=0\right]=\\
=(1-\beta)^{y+y'}\beta^{(1-y)+(1-y')}\bxi_{z_u(\bpi)z_v(\bpi)}+\alpha^{y+y'}(1-\alpha)^{(1-y)+(1-y')}(1-\bxi_{z_u(\bpi)z_v(\bpi)})
\end{split}
\end{equation}
Then, defining $\bPsi:=(1-\beta)\,\bXi+\alpha\,(1-\bXi)$ and $\bLambda:=\frac{(1-\beta)^2\,\bXi+\alpha^2\,(1-\bXi)}{\bPsi}$, it is easy to see that
model \eqref{eq:model1} is equivalent to

\begin{equation}\label{eq:model2}
\begin{aligned}
        \biy^{(2)}_{\bpi(u)\bpi(v)}\mid\biy^{(1)}_{uv},\bPsi,\bLambda,\bpi
        \sim&\begin{cases}
        \text{Bern}(\blambda_{z_u(\bpi)z_v(\bpi)})&\text{if }\,\biy^{(1)}_{uv}=1\\
        \text{Bern}\left(\frac{\bpsi_{z_u(\bpi)z_v(\bpi)}(1-\blambda_{z_u(\bpi)z_v(\bpi)})}{1-\bpsi_{z_u(\bpi)z_v(\bpi)}}\right)&\text{if }\,\biy^{(1)}_{uv}=0
    \end{cases}\\
    \biy^{(1)}_{uv}\mid\bPsi,\bpi\overset{ind}{\sim}&\,\text{Bern}(\bpsi_{z_u(\bpi)z_v(\bpi)})\\
        \bXi&\overset{iid}{\sim}\text{beta}(a_\xi,b_\xi)\\
    \bpi&\sim \text{PA-gCRP}(\varphi^{(n)})
    \end{aligned}
    \end{equation}  
Notice that this derivation is only possible because $z_u(\bpi)=z_{\bpi(u)}(\bpi)$ for any $u\in[n]$, that is two matched nodes are always in the same latent block. The representation in \eqref{eq:model2} carries over from the correlated Erd\H{o}s-R\'enyi case \citep{nate}.
Using again \eqref{eq:jointprob}, this version of the model is characterized by the following joint distribution
\begin{equation}\label{eq:joint2}
\begin{split}
p(\Ydr^{(1)},\Ydr^{(2)},\Xi,\pi)
    =\prod_{j\leq h}^{k(\pi)}\left[(1-\beta)^2\xi_{jh}+\alpha^2(1-\xi_{jh})\right]^{m_{jh}^1}\left[\beta(1-\beta)\xi_{jh}+\alpha(1-\alpha)(1-\xi_{jh})\right]^{m_{jh}^\times}\times\\
    \times \left[\beta^2\xi_{jh}+(1-\alpha)^2(1-\xi_{jh})\right]^{m_{jh}^0}\frac{\xi_{jh}^{a-1}(1-\xi_{jh})^{b-1}}{\mbox{B}(a,b)}p_k^{(n)}(n_1,\dots,n_k)
\end{split}
\end{equation}
where $m_{jh}^1=\sum_{u,v=1}^ny^{(1)}_{uv}y^{(2)}_{\zdr_u(\bpi)\zdr_v(\bpi)}\mathbbm{1}_{\{\zdr_u(\bpi)=j,\zdr_v(\bpi)=h\}}$ is the number of \emph{concordant edges} in $(\Ydr^{(1)},\Ydr_\pi^{(2)})$ between cycle $j$ and cycle $h$; similarly $m^\times_{jh}$ is the number of \emph{discordant edges}, while $m^0_{jh}$ is the number of \emph{concordant non-edges}. 
\section{Derivation of full conditionals}\label{sec:fullco}
Given model \eqref{eq:model1}, we have
\begin{equation*}
\begin{split}
     p(\Ydr^{(2)},\Ydr^{(1)},\Ydr,\pi\mid\alpha,\beta)= \beta^{e_1^{(1)}(\mathfrak{id})+e_1^{(2)}(\pi)}(1-\beta)^{\overline{e}_1^{(1)}(\mathfrak{id})+\overline e_1^{(2)}}\alpha^{e_0^{(1)}(\mathfrak{id})+e_0^{(2)}(\pi)}(1-\alpha)^{\overline e_0^{(1)}(\mathfrak{id})+\overline e_0^{(2)}(\pi)}\\
    \prod_{j\leq h}^{k(\pi)}\frac{\mbox{B}(1;a_\xi+m_{jh},b_\xi+\overline{m}_{jh})}{\mbox{B}(1;a_\xi,b_\xi)}
    p_{k(\pi)}^{(n)}(\cdr(\pi))
\end{split}
\end{equation*}
For step 1(i), we have
\begin{equation*}
\begin{split}
    p_{\balpha,\bbeta}(\pi^\star \mid \Ydr, \Ydr^{(1)}, \Ydr^{(2)}, \mathpzc{d}_v(\pi))=
    \frac{p_{\balpha,\bbeta}(\Ydr^{(2)},\Ydr^{(1)},\Ydr\mid\pi^\star)p(\pi^\star\mid\mathpzc{d}_v(\pi))}{p_{\balpha,\bbeta}(\Ydr^{(2)},\Ydr^{(1)},\Ydr\mid\mathpzc{d}_v(\pi))}=\\
    \frac{p_{\balpha,\bbeta}(\Ydr^{(2)}\mid\Ydr,\pi^\star)p_{\balpha,\bbeta}(\Ydr^{(1)}\mid\Ydr)p(\Ydr\mid\pi^\star)p(\pi^\star\mid\mathpzc{d}_v(\pi))}{p_{\balpha,\bbeta}(\Ydr^{(2)}\mid\Ydr,\mathpzc{d}_v(\pi))p_{\balpha,\bbeta}(\Ydr^{(1)}\mid\Ydr)p(\Ydr\mid\mathpzc{d}_v(\pi))}
    \end{split}
\end{equation*}
Now, $\bY$ only depends on the permutation $\bpi^\star$ through its cycle structure $\zdr(\bpi^\star)$. Moreover $\frac{p_{\balpha,\bbeta}(\Ydr^{(2)}\mid\Ydr,\pi^\star)}{p_{\balpha,\bbeta}(\Ydr^{(2)}\mid\Ydr,\mathpzc{d}_v(\pi))}\propto \frac{p_{\balpha,\bbeta}(\Ydr^{(2)}\mid\Ydr,\pi^\star)}{p_{\balpha,\bbeta}(\Ydr^{(2)}_{-v}\mid\Ydr_{-v},\mathpzc{d}_v(\pi))}$ and 
$\frac{p_{\balpha,\bbeta}(\Ydr\mid\Ydr,\pi^\star)}{p_{\balpha,\bbeta}(\Ydr^{(2)}\mid\Ydr,\mathpzc{d}_v(\pi))}\propto \frac{p_{\balpha,\bbeta}(\Ydr^{(2)}\mid\Ydr,\pi^\star)}{p_{\balpha,\bbeta}(\Ydr^{(2)}_{-v}\mid\Ydr_{-v},\mathpzc{d}_v(\pi))}$ by $\pi^\star$-independent constants. Hence we get the full conditional in \eqref{eq:mh}. The derivation of usable expressions for the three factors is given in the main article.

\noindent For step 1(ii), we have
\begin{equation}\label{eq:parent}
\begin{split}   p_{\bpi,\balpha,\bbeta}(y_{vu}\mid y_{v1},\dots,y_{vu-1},\Ydr_{-v},\Ydr^{(1)},\Ydr^{(2)})=\\\frac{p_{\bpi,\balpha,\bbeta}(\Ydr^{(1)},\Ydr^{(2)}\mid y_{v1},\dots,y_{vu},\Ydr_{-v})p_{\zdr(\bpi)}(y_{v1},\dots,y_{vu},\Ydr_{-v})}{p_{\bpi,\balpha,\bbeta}(y_{v1},\dots,y_{vu-1},\Ydr_{-v},\Ydr^{(1)},\Ydr^{(2)})}
\end{split}
\end{equation}
where
\begin{equation*}
 \begin{split}p_{\bpi,\balpha,\bbeta}(\Ydr^{(1)},\Ydr^{(2)}\mid y_{v1},\dots,y_{vu},\Ydr_{-v})=\\
 \prod_{\ell=1,2}p_{\bpi,\balpha,\bbeta}(y_{vu}^{(\ell)}\mid y_{vu})\left\{\prod_{u'=1}^{u-1}p_{\bpi,\balpha,\bbeta}(y^{(\ell)}_{vu'}\mid y_{vu'})\right\}p_{\bpi,\balpha,\bbeta}(\Ydr^{(\ell)}_{-v}\mid \Ydr_{-v})\times \\p_{\bpi,\balpha,\bbeta}(y^{(1)}_{vu+1},\dots,y^{(1)}_{vn},y^{(2)}_{vu+1},\dots,y^{(2)}_{vn}\mid y_{v1},\dots,y_{vu},\Ydr_{-v})
 \end{split}
\end{equation*}
and
\begin{equation*}
  p_{\zdr(\bpi)}(y_{v1},\dots,y_{vu},\Ydr_{-v})=\prod_{j\leq h}^{k(\bpi)}\frac{\mbox{B}(a_\xi+m^{-vu+1}_{jh},b_\xi+\overline{m}_{jh}^{-vu+1})}{\mbox{B}(a_\xi,b_\xi)} 
\end{equation*}
with $m^{-vu+1}_{jh},\,\overline{m}_{jh}^{-vu+1}$ giving the edges and non-edges count in $\Ydr$ between cycles $j$ and $h$ disregarding the interactions between node $v$ and nodes $u'\geq u+1$. Hence, considering $y_{vu}$-dependent factors only, \eqref{eq:parent} is proportional to \eqref{eq:prob1} for $y_{vu}=1$ and to \eqref{eq:prob0} for $y_{vu}=0$.

\indent For step 2, the full conditionals of the noise rates $\balpha,\bbeta$ are easily obtained by beta-binomial conjugacy.
\vspace{-18pt}
\section{More on point estimation}\label{sec:point_estim}

\subsection{On distances in $\mathcal S_n$}\label{sec:dist}
 Several distances between permutations have been considered in statistical applications \citep[see \emph{e.g.} Chapter 6 in][]{Dia(88)}. The \emph{Hamming distance}, which we denote as $d_H(\cdot,\cdot)$, simply counts the output mismatches between the two maps, \emph{i.e.} $d_H(\pi,\sigma)~=~\sum_{i=1}^n\mathbbm{1}_{\{\pi(i)=\sigma(i)\}}$. The \emph{Cayley distance}, denoted as $d_C(\cdot,\cdot)$, is the minimum number of transpositions needed to transform a permutation into another, and can be easily evaluated via Cayley's identity. Both these distances are \emph{bi-invariant}, that is left invariant, \emph{i.e.} $d(\rho\cdot\pi,\rho\cdot\sigma)=d(\pi,\sigma)$ for any $\rho\in\mathcal{S}_n$, and right invariant, \emph{i.e.} $d(\pi,\sigma)=d(\pi\cdot\rho',\sigma\cdot\rho')$ for any $\rho'\in\mathcal{S}_n$. This property translates into invariance to the initial and final labeling of the objects, which makes these distances suitable for contexts where the permutation represents a matching between two sets of objects that are not initially ordered, as in the case of graph matching. This is not the case, \emph{e.g.}, when one deals with random ranks, as in Bradley-Terry-Luce or Mallows models \citep{bradleyterry,luce,mallows}. In those models the output labels of the permutations (which are observed) are ordered and metrized. For example, if one observes $\pi(3)=1>\pi(4)=2>\pi(2)=3>\pi(1)=4$, this means: 3 is ranked best than 4, which is ranked best than 2, which is ranked best than 1. Therefore, distances that can disentangle an ordering are employed in this case, including the \emph{Spearman's footrule} $d_1(\pi,\sigma)=\sum_{i=1}^n|\pi(i)-\sigma(i)|$, the \emph{Spearman's rank correlation} $d_2(\pi,\sigma)=\sum_{i=1}^n(\pi(i)-\sigma(i))^2$, and the \emph{Kendall-$\tau$ distance}, defined as the minimum number of pairwise adjacent transpositions to transform a permutation into another. These are all right invariant (\emph{i.e.} do not consider the original ordering of the inputs) but not left invariant, since they depend on the output labels. If the minimization in \eqref{eq:postexpdist} is done with respect to the Kendall-$\tau$ distance, its solution is the Kemeny consensus ranking \citep{kemeny} of the posterior sample $(\pi_s)_{s=1}^S$.

Among the presented bi-invariant distances, we choose to search for a point estimate $\hat\pi$ that minimizes Cayley distance from the sample. In fact, Hamming distance  may drastically disregard the cycle structure information. For example, if $\pi=(123)(456)$ and $\pi'=(132)(465)$ then $d_H(\pi,\pi')=6$, even if $\zdr(\pi)=\zdr(\pi')$, whereas $d_C(\pi,\pi')=4$. In Cayley distance, permutations with nested cycle structures are generally close: if $\sigma~=~(13)(2)(46)(5)$ then $d_C(\pi,\sigma)=2$ while $d_H(\pi,\sigma)=4$. However, Cayley distance does not metrize $\mathcal S_n$ so that similarity in cycle structure always implies lower distance. It is easy to prove that if $\zdr(\pi)=\zdr(\pi')$ and $\pi\neq\pi'$ then $d_C(\pi,\pi')>1$, while there exist $\sigma'$ such that $\zdr(\sigma')\neq\zdr(\pi)$ and $d_C(\pi,\sigma')=1$. For example, take again $\pi=(123)(456),\,\pi'=(132)(465)$ and $\sigma'=(12)(3)(456)$.

In conclusion, even if the choice of distance does not completely align with the specifics of our posterior sample, we manage to obtain a point estimate that successfully targets the true matching in the different simulation scenarios showed in Section \ref{sec:sim}.
\begin{algorithm}[t!]
	\caption{perSALSO}\label{alg:2}
	\begin{algorithmic}
\footnotesize
\State {\bf Input:} posterior sample $(\pi_s)_{s=1}^S$, $n_{zeal}$ maximum zealous updates

    \State 1. set $\brho\leftarrow\texttt{sample}(1 \ldots, n)$ \Comment Initialization phase
    \State 2. set $\hat\pi^{(1)}_1\leftarrow (\brho(n))$
	\For{$i=2,\dots,n$}
		\State prune $(\pi_s)_{s=1}^S$ from nodes $\brho(1),\dots,\brho(n-i)$: compute $(\mathpzc{d}_{\mathcal V^{(\brho)}_{n-i}}(\pi_s))_{s=1}^S$
        \State order $\mathcal A_{\brho(n-i+1)}(\hat\pi^{(1)}_{i-1})$: $\sigma$ s.t. $\brho\cdot\sigma(n-i+1)$ is most frequent in $\left(\mathpzc{d}_{\mathcal V^{(\brho)}_{n-i}}\left(\brho\cdot\pi_s(n-i+1)\right)\right)_{s=1}^S$ first
    \State set $min\leftarrow\texttt{Inf}$
    \For{$\sigma\in\mbox{ordered }\mathcal A_{\brho(n-i+1)}(\hat\pi^{(1)}_{i-1})$}
        \State compute $\tilde f^{(i)}_C(\sigma):=\frac{1}{S}\sum_{s=1}^Sd_C(\mathpzc{d}_{\mathcal{V}^{(\brho)}_{n-i}}(\pi_s),\sigma)$ with early stopping
        \State{\bf if\;} {$\tilde f^{(i)}_C(\sigma)<min$} {\bf \;then\;} 
        set $\hat\pi^{(2)}_i\leftarrow \sigma$; set $min\leftarrow \tilde f_C^{(i)}(\sigma)$ 
    \EndFor
    \EndFor
    \State 3. set $\hat\pi^{(2)}_0\leftarrow \hat\pi^{(1)}_n$ \Comment Sweetening phase
        \State 4. set $\brho'\leftarrow \texttt{sample}(1,\dots,n)$
        \For{$i =1,\dots,n$}
        \State order $\mathcal{A}_{\brho'(i)}(\mathpzc{d}_{\brho'(i)}(\hat\pi_{i-1}^{(2)}))$: $\sigma$ s.t. $\brho'\cdot\sigma(i)$ is most frequent in $\left(\brho'\cdot\pi_s(i)\right)_{s=1}^S$ first
		\For{$\sigma\in\mbox{ordered }\mathcal{A}_{\brho'(i)}(\mathpzc{d}_{\brho'(i)}(\hat\pi_{i-1}^{(2)}))$}
        \State compute $ f_C(\sigma)$ with early stopping
        \State{\bf if\;}{$f_C(\sigma)<min$}
        {\bf \;then\;} set $\hat\pi^{(1)}_i\leftarrow \sigma$; set $min\leftarrow f_C(\sigma)$
    \EndFor
    \EndFor
    \State{\bf if\;} { $f_C(\hat\pi^{(2)}_n)<f_C(\hat\pi^{(2)}_0)$ } {\bf \;then\;} set $\hat\pi^{(2)}_0\leftarrow \hat\pi^{(2)}_n$; {\bf go to } 4.
    \State 5. set $\hat\pi^{(3)}_0\leftarrow\hat\pi^{(2)}_n$
    \State 6. set $\kappa\leftarrow k(\hat\pi^{(3)}_0)$ \Comment{Zealous updates phase} 
\State{\bf repeat} $n_{zeal}$ times
\State\hspace{\algorithmicindent} set $\bij \leftarrow \texttt{unif}(1,\dots,\kappa)$
\State\hspace{\algorithmicindent} set $\brho_\bij\leftarrow\texttt{sample}(1,\dots,n_\bij)$ ($n_\bij$: $\bij$-th cycle length of $\hat \pi^{(3)}_0$)
\State\hspace{\algorithmicindent} set $\hat\pi^{(3)}_1\leftarrow \mathpzc{d}_{\mathcal C_{\bij}\setminus\{\brho_\bij(n_\bij)\}}(\hat \pi^{(3)}_0)$ ($\mathcal C_{\bij}$: $\bij$-th cycle of $\hat \pi^{(3)}_0$)
\State\hspace{\algorithmicindent} {\bf for} {$i=2,\dots,n_\bij$} {\bf do}
\State\hspace{\algorithmicindent}\hspace{\algorithmicindent} prune $(\pi_s)_{s=1}^S$ from nodes $\brho_\bij(1),\dots,\brho_\bij(n_\bij-i)$: compute $(\mathpzc{d}_{\mathcal V^{(\brho_\bij)}_{n-i}}(\pi_s))_{s=1}^S$\State\hspace{\algorithmicindent}\hspace{\algorithmicindent} order $\mathcal A_{\brho_\bij(n_\bij-i+1)}(\hat\pi^{(3)}_{i-1})$ as in Initialization
\State\hspace{\algorithmicindent}\hspace{\algorithmicindent} reset $min\leftarrow\texttt{Inf}$
\State\hspace{\algorithmicindent}\hspace{\algorithmicindent} {\bf for} {$\sigma\in\mbox{ordered }\mathcal A_{\brho_\bij(n_\bij-i+1)}(\hat\pi^{(3)}_{i-1})$} {\bf do}
    \State \hspace{\algorithmicindent}\hspace{\algorithmicindent}\hspace{\algorithmicindent} compute $\tilde f^{(i)}_C(\sigma):=\frac{1}{S}\sum_{s=1}^Sd_C(\mathpzc{d}_{\mathcal{V}^{(\brho_\bij)}_{n_\bij-i}}(\pi_s),\sigma)$ with early stopping
\State\hspace{\algorithmicindent}\hspace{\algorithmicindent}\hspace{\algorithmicindent}{\bf if\; } {$\tilde f^{(i)}_C(\sigma)<min$}
        {\bf \;then\;} set $\hat\pi^{(3)}_i\leftarrow \sigma$; set $min\leftarrow \tilde f_C^{(i)}(\sigma)$
\State\hspace{\algorithmicindent}\hspace{\algorithmicindent}{\bf end for}
\State\hspace{\algorithmicindent}\hspace{\algorithmicindent}{\bf if\;} {$f_C(\hat\pi^{(3)}_{n_\bij})<f_C(\hat\pi^{(3)}_{0})$} {\bf\;then\;} {set $\hat\pi^{(3)}_0\leftarrow\hat\pi^{(3)}_{n_\bij}$}
\State {\bf end repeat}
\State set $\hat\pi\leftarrow\hat\pi^{(3)}_0$
\State{\bf Output}: $\hat\pi$ 
	\end{algorithmic}
\end{algorithm}
\subsection{Speed-ups for perSALSO}\label{sec:speed}

The posterior summary procedure described in Section \ref{sec:summary} is detailed in Algorithm \ref{alg:2}. 

In the following we describe some techniques we implemented for both speeding up the minimization procedure and cutting the cost of each evaluation of $f_C(\cdot)$.
The discreteness of $\mathcal S_n$ often causes ties in the values of $f_C(\cdot)$.  To avoid the greedy search to be stuck jumping back and forth between equivalent local minima, in all the phases, we accept a move to $\sigma$ just if it is \emph{strictly} better than $\hat\pi_{i-1}$. On the other hand, this may cause poor exploration of the space. To counterbalance, we order the set of available new states so that we check first the ones matching the new node to outputs which, in the sample, are more frequent for that node, as described in Algorithm \ref{alg:2}. The evaluation of $f_C$ (or $\tilde f_C$, which in Algorithm \ref{alg:2} is the average Cayley distance from the pruned sample) requires in principle to scan the full sample $(\pi_s)_{s=1}^S$. We can speed up the minimization with an easy early stopping scheme that cuts short the computation of $f_C(\sigma)$ and discards $\sigma$ as soon as $\sum_{s=1}^{\tilde s}d_C(\pi_s,\sigma)>Sf_C(\hat\pi_{i-1})$ for some $\tilde s\leq S$; same for the pruned sample in phases (1) and (3). Furthermore, even when employing Cayley's identity, the calculation of each $d_C(\pi_s,\sigma)$ may still be computationally expensive, since it  involves the inversion of $\sigma$, which requires a full scan of the permutation vector. We can speed this up by devising the function that obtains all the candidates $\sigma\in\mathcal A_{\brho(n-i+1)}(\hat\pi_{i-1})$ to output also their inverses, constructing them from the inverse of the current state $\hat\pi_{i-1}$ with the same logic as in \eqref{eq:defstar}.

Still, given a constant cost $C$ for the computation of Cayley distance between two permutations, each pass of phase (2) of perSALSO (which is the most expensive) is $\mathcal O(CSn^2)$ (not considering the early stopping). We propose a two-step procedure that cuts the cost of the minimization by constraining each $\mathcal A_{\brho(n-i+1)}(\hat\pi_{i-1})$. Specifically, we first find a point estimate for the cycle structure of $\hat \zdr$ by employing the original SALSO \citep{salso} on the sample $(\zdr(\pi_s))_{s=1}^S$. Then we run our perSALSO with the constraint that all the searches are limited to permutations having $\hat \zdr$ as cycle structure. This means that each node reallocation in phase (2) requires $n_j$ evaluations of $f_C$ instead of $n$. This fast version is not guaranteed to give the same point estimate as full perSALSO, even seeding all the random operations, since, in general, $\zdr(\hat\pi)\neq\hat\zdr$. However, in our applications, this is mostly the case, which is not surprising given the importance that the cycle structure information has in our model.

\vspace{10pt}
\let\oldbibliography\thebibliography
\renewcommand{\thebibliography}[1]{\oldbibliography{#1}
	\setlength{\itemsep}{4.5pt}} 

\spacingset{1}
\begingroup
\fontsize{11pt}{11pt}\selectfont
\bibliographystyle{jasa3}
\bibliography{biblio.bib}
\endgroup

\end{document}